\tikzset{%
  highlight/.style={rectangle, rounded corners,
draw,
very thick,inner sep=1pt,color=#1!60}
}
\newcommand{\progStore}{\mathsf{store}}
\newcommand{\progOkChange}{\mathsf{check}}
\newcommand{\progsplit}{\mathsf{split}}
\newcommand{\progmerge}{\mathsf{merge}}
\newcommand{\progFlatten}{\mathsf{flatten}}
\newcommand{\atmset}{\mathtt{\mathbb X}}	
\newcommand{\cp}[2]{{#2}^\mathbf{#1}}
\newcommand{\cpr}[2]{\cp{#1}{#2}_R}
\newcommand{\cpw}[2]{\cp{#1}{#2}_W}
\newcommand{\modl}{\mathsf m}	
\newcommand{\mrg}[3]{ ^{#2}_{#3} \triangleright \, #1 }
\newcommand{\pll}{ {||} }							
\newcommand{\splt}[3]{ #1 \triangleleft \, ^{#2}_{#3} }
\newcommand{\readOf}[1]{\mathbb{R}_{#1}}
\newcommand{\readable}[1]{\mathtt{r}_{#1}}
\newcommand{\readset}{\mathsf{Rd}}
\newcommand{\valuset}{\mathsf{V}}
\newcommand{\writable}[1]{\mathtt{w}_{#1}}
\newcommand{\writeset}{\mathsf{Wr}}
\newcommand{\testendo}{?\!\!?}			
\newcommand{\testpdl}{?}				
\newcommand{\writeOf}[1]{\mathbb{W}_{#1}}
\newcommand{\storeset}{\mathsf{St}}
\newcommand{\Dlpa}{\ensuremath{\mathsf{DL\text{-}PA}}\xspace}
\newcommand{\DlpaPll}{\ensuremath{\mathsf{DL\text{-}PA}^\pll}\xspace}
\newcommand{\Pdl}{\ensuremath{\mathsf{PDL}}\xspace}
\newcommand{\assgntop}[1]{{\mathtt {+} #1}}
\newcommand{\assgnbot}[1]{{\mathtt {-} #1}}
\newcommand{\assgntopR}[1]{{\mathtt r {+} #1}}
\newcommand{\assgnbotR}[1]{{\mathtt r {-} #1}}
\newcommand{\assgntopW}[1]{{\mathtt w {+} #1}}
\newcommand{\assgnbotW}[1]{{\mathtt w {-} #1}}
\newcommand{\assgntopV}[1]{{\mathtt {+} #1}}
\newcommand{\assgnbotV}[1]{{\mathtt {-} #1}}
\newcommand{\assgnpropV}[2]{(#1 \testpdl ; \assgntopV{#2}) \ndet (\lnot #1 \testpdl ; \assgnbotV{#2})}
\newcommand{\card}[1]{|#1|}
\newcommand{\eqdef}{\stackrel{\text{def}}{=}}
\newcommand{\intPgm}[1]{\llbracket #1 \rrbracket}
\newcommand{\lbox}[1]{ \big[ #1 \big] }
\newcommand{\ldia}[1]{ \big\langle #1 \big\rangle}
\newcommand{\leqv}{ \leftrightarrow }
\newcommand{\limp}{ \rightarrow }
\newcommand{\ndet}{\,{\cup}\,}
\renewcommand{\phi}{\varphi}
\newcommand{\propset}{\mathbb P}
\newcommand{\propsetOf}[1]{\propset_{#1}}
\newcommand{\modinter}{\cap}
\newcommand{\seqseq}[1]{ \text{\Large ;}_{#1} ~ }
\newcommand{\set}[1]{\{#1\}}
\newcommand{\suchthat}{~ : ~}
\newcommand{\tuple}[1]{ \langle #1 \rangle}
\title{Resource Separation in Dynamic Logic of Propositional Assignments }
\author{Joseph Boudou$^1$, Andreas Herzig$^2$, Nicolas Troquard$^3$}
\institute{
  IRIT, University of Toulouse, France \and
  IRIT, CNRS, France \and
  Free University of Bozen-Bolzano, Italy 
}
\begin{document}
\maketitle

\begin{abstract}
We extend dynamic logic of propositional assignments by adding an operator of parallel composition that is inspired by separation logics. 
We provide an axiomatisation via reduction axioms, thereby establishing decidability. 
We also prove that the complexity of both the model checking and the satisfiability problem stay in PSPACE.
\end{abstract}

\keywords{Dynamic logic, separation logic, propositional assignments, parallel composition, concurrency}

\section{Introduction}

It is notoriously delicate to extend Propositional Dynamic Logic \Pdl with an operator of parallel composition of programs. 
Several attempts were made in the literature:
Abrahamson as well as Mayer and Stockmeyer studied a semantics in terms of interleaving \cite{Abrahamson80,MayerS96}; 
Peleg and Goldblatt modified the interpretation of programs from a relation between possible worlds to a relation between
possible worlds and sets thereof \cite{Peleg87,Goldblatt92}; 
Balbiani and Vakarelov studied the interpretation of parallel composition of programs $\pi_1$ and $\pi_2$ as 
the intersection of the accessibility relations interpreting $\pi_1$ and $\pi_2$ \cite{BalbianiV03}. 
However, it seems fair to say that there is still no consensus which of these extensions is the `right' one. 

Dynamic Logic of Propositional Assignments \Dlpa \cite{BalbianiHerzigTroquard-Lics13,BalbianiHST14} 
is a version of Propositional Dynamic Logic (PDL) whose atomic programs are 
assignments of propositional variables $p$ to true or false, respectively written $\assgntop p$ and $\assgnbot p$. 
We and coauthors have shown that many knowledge representation concepts and formalisms can be captured in \Dlpa, such as 
update and revision operations \cite{Herzig-Kr14}, 
database 
repair 
\cite{
FeuilladeHR19}, 
lightweight dynamic epistemic logics \cite{DBLP:conf/atal/CharrierS15,CooperHMMR16,DBLP:conf/atal/CharrierS17},
planning~\cite{HerzigEtal-Ecai14}, and
judgment aggregation \cite{DBLP:journals/logcom/NovaroGH18}. 
The mathematical properties of \Dlpa are simpler than those of PDL, in particular, 
the Kleene star can be eliminated \cite{BalbianiHerzigTroquard-Lics13} and 
satisfiability and model checking are both PSPACE complete~\cite{BalbianiHST14}. 

In this paper we investigate how dynamic logic can be extended with 
a program operator of parallel composition $\pi_1 \pll \pi_2$ of two programs $\pi_1$ and $\pi_2$ 
that is inspired by separation logic. 
The latter was studied in the literature as an account of concurrency,
among others by Brookes and by O'Hearn 
\cite{OHearn04,Brookes04,BrookesO16}.
Their Concurrent Separation Logic is characterised by two main principles:
\begin{enumerate}
\item
When two programs are executed in parallel then 
the state of the system is partitioned (`separated') between the two programs: 
the perception of the state and its modification is viewed as 
being local to each of the two parallel programs. 
Each of them therefore has a partial view of the global state. 
This entails that parallelism in itself does not modify the state of the system:
the parallel execution of two programs that do nothing does not change the state. 
The formula 
$\phi \limp \lbox{ \top ? \pll \top ? } \phi$ 
should therefore be valid, where ``$ \top ? $'' is the test that $\top$ is true (which always succeeds).
\item
The execution of a parallel program $\pi_1 \pll \pi_2$ should be insensitive to 
the way the components of $\pi_1$ and $\pi_2$ are interleaved. 
Hence ``race conditions'' \cite{BrookesO16} must be avoided: 
the execution should not depend on the order of execution of atomic programs in $\pi_1$ and $\pi_2$ 
(where we consider tests to be atomic, too). 
Here we interpret this requirement in a rather radical way: 
when there is a race condition between two programs then they cannot be executed in parallel. 
For example, the parallel program $\assgntopV p \pll \assgnbotV p $
where $\assgntopV p$ makes $p$ true and $\assgnbotV p $ makes $p$ false 
is inexecutable because there is a conflict: 
the two possible interleavings 
$\assgntopV p ; \assgnbotV p $ and $\assgnbotV p  ; \assgntopV p $ are not equivalent. 
We even consider that $ \assgntopV p \pll \assgntopV p$ 
is inexecutable,
which some may consider a bit over-constrained\footnote{This restriction can be related to the fact that
the formula $e \mathrel\mapsto e' \mathbin\ast e \mathrel\mapsto e'$ is unsatisfiable in Separation Logic~\cite{Reynolds02}.}.
\end{enumerate}


In formal frameworks for the verification of parallel programs such as the one proposed by Brookes 
and O'Hearn~\cite{Brookes04,OHearn04}, allowing race conditions is a necessary feature for the 
framework to be able to prove a property of programs, namely that they are race-free. 
On the contrary, dynamic logics permit to prove properties of formulas, and atomic actions are actually even totally abstracted away in most dynamic logics. 
In such abstract settings,
whether two given atomic actions can be executed concurrently is a semantic detail of each model.
For instance, in dynamic logics with a parallel composition based on separation (like in
\cite{DBLP:journals/entcs/BenevidesFV11,Boudou16,DBLP:journals/logcom/BalbianiB18}),
the separation relation of the model provides the possibility to forbid race conditions.
The race condition issue arises in logics based on \Dlpa because atomic actions are concrete:
basically, each atomic action potentially changes the valuation of exactly one propositional variable.
Hence it is natural to consider access to the propositional variables as the main resource. 
In this perspective, a decision has to be made on whether the separation of these resources is strict or not, 
i.e., whether race conditions are allowed or not.
In the present work, we have chosen a strict separation semantics because it is the simplest solution satisfying
the two principles stated above.

We have not yet said what one should understand by a \Dlpa system state. 
A previous approach of one of us only considered the separation of valuations, i.e., 
of truth values of propositional variables \cite{Herzig-Wollic13}. 
Two separating conjunctions in the style of separation logic were defined on such models. 
This however did not allow one to define an adjoint implication as usually done in the separation logic literature, which was somewhat unsatisfactory. 
Another paper that was coauthored by one of us has richer models: 
valuations are supplemented by information about writability of variables \cite{HerzigEtal-Ijcai19}. 
It is supposed that a variable can only be assigned by a program when it is writable. 
Splitting and merging of such models can be defined in a natural way, thus providing a meaningful interpretation of parallel composition. 
When parallel composition is based on separation, writability information permits to resolve merge conflicts.
Consider for instance the executions of the two programs $\top \testpdl \pll \assgntopV p$ and $\assgnbotV p \pll \assgntopV p$ from a state in which $p$ is false.
We argue that intuitively, the former program should lead to a state in which $p$ is true
whereas the latter program should either not be executable
or non-deterministically lead to two possible states, one where $p$ is true and one where $p$ is false.
However, without writability information, the states before the merge of each of these programs turn out to be identical:
$p$ is false in the left branch but true in the right branch.
Writability information allows the merge operation to distinguish these two situations and to resolve the conflict in the former case.

We here push this program further and consider models having moreover information about readability of variables. 
We suppose that writability implies readability\footnote{%
As suggested by one of the reviewers of a previous version of the present paper \cite{DBLP:conf/tap/BoudouHT19}, this constraint may be relaxed 
and one may suppose that a program can modify a variable without being able to read its value. 
This would simplify the presentation of the logic; however, we believe that our inclusion constraint is natural in most applications.
}
and that a variable can only be tested if it is readable. 
Our tests $\phi \testendo$ therefore differ from the standard tests of PDL and \Dlpa in that their 
executability depends on whether the relevant variables are readable. In particular, while 
$\ldia{ p \testendo } \top \limp p$, remains valid, its converse
$p \limp \ldia{ p \testendo } \top $ becomes invalid in our logic: 
it may be the case that $p$ is true but cannot be read.

Distinguishing the variables that can be tested 
permits some useful checks.
Consider for instance the execution of the program $\left(\assgnbotV p ; q \testendo \right) \pll \left(\assgnbotV q ; p \testendo\right)$
from a state in which both $p$ and $q$ are true.
Without readability, this program can be executed and results in a state in which both $p$ and $q$ are false.
However, no interleaving of this program can be executed.
Adding readability of variables permits to detect this issue:
we enforce that a variable cannot be read by one subprogram of a parallel composition if it can be written by the other subprogram.

The paper is organised as follows.
In Section~\ref{sec:models} we define models and the two ternary relations `split' and `merge' on models. 
In Section~\ref{sec:language} we define the language of our logic and 
in Section~\ref{sec:interpretation} we give the interpretation of formulas and programs. 
In Section~\ref{sec:axiomatisation} we axiomatise the valid formulas by means of reduction axioms and
in Section~\ref{sec:complexity} we establish that the satisfiability problem is PSPACE complete. 
Section~\ref{sec:conclusion} sums up our contributions and discusses related work and the application to parallel planning. 
The annex contains a proof of associativity of parallel composition.\footnote{
The present paper is a more elaborate version of 
\cite{DBLP:conf/tap/BoudouHT19}. 
It contains proofs of the results, more motivation and explanations, and furthermore proves that our operator of parallel composition is associative. 
}

\section{Models and Their Splitting and Merging }\label{sec:models} 

Let $\propset$ be a countable set of propositional variables. 
We use $p, q,\ldots$ for elements of $\propset$. 
A model (alias a system state) is a 
triple $\modl = \tuple{\readset,\writeset,\valuset}$ 
where $\readset$, $\writeset$, and $\valuset$ are subsets of $\propset$ such that $\writeset \subseteq \readset$. 
The intuition is that $\readset$ is the set of readable variables, $\writeset$ is the set of writable variables, and $\valuset$ is a valuation: 
its elements are true, while those of its complement $\propset \setminus \valuset$ are false. 
The constraint that $\writeset \subseteq \readset$ means that writability implies readability. 

The special case when $\writeset = \propset$ is typical when checking the validity or the satisfiability of a formula $\phi$.
Lemma~\ref{theo:irrelevantVariables} in Section~\ref{sec:interpretation} will prove that,
as expected, this case is equivalent to the case where $\writeset$ and $\readset$ are the set of propositional variables occurring in $\phi$.
In fact, the writability and readability sets are mostly useful for checking properties of subprograms of parallel compositions.
On the other end of the spectrum, the case $\writeset = \emptyset$ is of little interest since the only executable programs will be those without assignments.

Two models 
$\modl_1 = \tuple{\readset_1,\writeset_1,\valuset_1}$ and 
$\modl_2 = \tuple{\readset_2,\writeset_2,\valuset_2}$ are \emph{RW-disjoint} 
if and only if the writable variables of one model and the readable variables of the other are disjoint, i.e., 
if and only if $\writeset_1 \cap \readset_2  = \writeset_2 \cap \readset_1  = \emptyset $. 
For example, 
$\modl_1 = \tuple{ \set p , \set p , \emptyset}$ and 
$\modl_2 = \tuple{ \set p , \emptyset , \emptyset}$ fail to be RW-disjoint:
in $\modl_1$, some program $\pi_1$ modifying the value of $p$ may be executable, while 
for programs executed in $\modl_2$, the value of $p$ 
may differ depending on whether it is read 
before or after the modification by $\pi_1$ took place. 

As writability implies readability, 
RW-disjointness of $\modl_1$ and $\modl_2$ implies that
$\writeset_1$ and $\writeset_2$ are disjoint. 

We define ternary relations $\triangleleft$ (`split') and $\triangleright$ (`merge') on models as follows:
\begin{center}\begin{tabular}{lll}
$\splt{\modl}{\modl_1} {\modl_2} $ & iff & $\modl_1$ and $\modl_2$ are RW-disjoint, 
$\readset = \readset_1 \cup \readset_2 $, $\writeset = \writeset_1 \cup \writeset_2$,  \\&&  and $\valuset = \valuset_1 = \valuset_2$;
\\
$\mrg{\modl}{\modl_1} {\modl_2} $ & iff  & $\modl_1$ and $\modl_2$ are RW-disjoint, 
$\readset = \readset_1 \cup \readset_2$, $\writeset = \writeset_1 \cup \writeset_2$,  \\&&
$\valuset_1 \setminus \writeset = \valuset_2 \setminus \writeset $, and 
$\valuset = (\valuset_1 \cap \writeset_1) \cup (\valuset_2 \cap \writeset_2) \cup (\valuset_1 \cap \valuset_2) $. 
\end{tabular}\end{center}
For example, for 
$\modl = \tuple{ \readset,\writeset,\valuset }$ and 
$\modl_2 = \tuple{ \readset_2,\writeset_2,\valuset_2 }$ 
we have $\splt{\modl}{\modl} {\modl_2} $ if and only if
$\writeset_2 = \emptyset$, $\valuset_2 = \valuset$, and
$\readset_2 \subseteq \readset \setminus \writeset$.
In particular, 
$\splt{ \tuple{\emptyset,\emptyset,\emptyset} }{\modl_1}{\modl_2}$ if and only if 
$\modl_1 = \modl_2 = \tuple{\emptyset,\emptyset,\emptyset}$. 
Contrarily to splitting, merging does not keep the valuation constant: 
it only keeps constant the non-modifiable part $\valuset \setminus \writeset$ of the valuation $\valuset $ and 
puts the results of the allowed modifications of $\writeset$ together. 
These modifications cannot conflict because $\modl_1$ and $\modl_2 $ are RW-disjoint. 
Figure~\ref{fig:splitmerge} illustrates each of these two operations by an example. 
The checks that are performed in the merge operation are reminiscent of the self composition technique in the analysis of secure information flows~\cite{DarvasEtal05,Scheben2016}.\footnote{%
We are grateful to Rainer H\"ahnle for pointing this out to us. 
}

\begin{figure}[t]
  \centering
  \begin{tikzpicture}[>=latex', join=bevel, initial text = , every node/.style=, scale=0.9]
  \node (tosplit) at (0bp, 0bp) [draw] {$\tuple{ \set{p,q,r} , \set{p,q} , \set{p,q,r} } $};

  \node (split1) at (80bp, 30bp) [draw] {$\tuple{ \set{p,r} , \set{p} , \set{p,q,r} } $};
  \node (split2) at (80bp, -30bp) [draw] {$\tuple{ \set{q,r} , \set{q} , \set{p,q,r} } $};

  \node (tomerge1) at (200bp, 30bp) [draw] {$\tuple{ \set{p,r} , \set{p} , \set{q,r} } $};
  \node (tomerge2) at (200bp, -30bp) [draw] {$\tuple{ \set{q,r} , \set{q} , \set{p,r} } $};
  \node (merged) at (280bp, 0bp) [draw] {$\tuple{ \set{p,q,r} , \set{p,q} , \set{r} } $};
  
  \draw[thick, dashed, ->] (tosplit) to node [] {} (split1.west);
  \draw[thick, dashed, ->] (tosplit) to node [] {} (split2.west);

  \draw[thick, dotted, ->] (tomerge1.east) to node [] {} (merged);
  \draw[thick, dotted, ->] (tomerge2.east) to node [] {} (merged);  
  
\end{tikzpicture}
\caption{Examples of split and merge operations:
the left side illustrates the split of the model $\tuple{ \set{p,q,r} , \set{p,q} , \set{p,q,r} } $ into 
$\tuple{ \set{p,r} , \set{p} , \set{p,q,r} } $ and 
$\tuple{ \set{q,r} , \set{q} , \set{p,q,r} } $; 
the right side illustrates the merge of the models
$\tuple{ \set{p,r} , \set{p} , \set{p,q,r} } $ and 
$\tuple{ \set{q,r} , \set{q} , \set{p,q,r} } $ into
 $\tuple{ \set{p,q,r} , \set{p,q} , \set{r} } $.
}
\label{fig:splitmerge} 
\end{figure}

\goodbreak
The set $\readset$ of readable variables of a model $\modl$ 
determines which models cannot be distinguished from $\modl$:
\begin{center}
$\modl \sim \modl'$ \ iff \ $\readset = \readset' , \writeset = \writeset' , \valuset \cap \readset = \valuset' \cap \readset' $.
\end{center}
Hence $\modl $ and $\modl'$ are indistinguishable if 
(1) they have the same readable and writable variables and  
(2) the valuations are identical as far as their readable parts are concerned.
This relation will serve to interpret tests: 
the test $\phi \testendo $ of a formula $\phi$ is conditioned by its truth in all read-indistinguishable models, i.e., 
in all models where the readable variables have the same truth value. 

\section{Language}\label{sec:language}

Formulas and programs are defined by the following grammar,
where $p$ ranges over the set of propositional variables $\propset$:
\begin{align*}
\phi & ::= p \mid \top  \mid  \lnot \phi  \mid  \phi \lor \phi  \mid  \ldia \pi \phi ,
\\
\pi & ::= \assgntopV p \mid \assgnbotV p \mid
		\assgntopR p \mid \assgnbotR p \mid
		\assgntopW p \mid \assgnbotW p \mid
			\phi \testpdl \mid 
			\phi \testendo \mid 
			\pi ; \pi \mid \pi \ndet \pi \mid 
			\pi^\ast \mid \pi \pll \pi .
\end{align*}
The program $\assgntopV p$ makes $p$ true and $\assgnbotV p $ makes $p$ false, where the executability of these two programs is conditioned by the writability of $p$. 
The program $\assgntopR p$ makes $p$ readable and 
$\assgnbotR p $ makes $p$ unreadable; similarly,
$\assgntopW p$ makes $p$ writable and 
$\assgnbotW p$ makes $p$ non-writable.
We suppose that these four programs are always executable. 
The program $\phi \testpdl$ is the PDL test that $\phi$, that we call \emph{exogenous};
$\phi \testendo $ is the \emph{endogenous} test that $\phi$: it is 
conditioned by the readability of the relevant variables of $\phi$. 

The formula $\lbox \pi \phi$ abbreviates $\lnot \ldia \pi \lnot \phi$.
Given an integer $n \geq 0$, the program $\pi^n$ is defined inductively by 
$\pi^0 = \top \testpdl $ and 
$\pi^{n+1} = \pi ; \pi^n $. 
Similarly, $\pi^{\leq n}$ is defined by 
$\pi^{\leq 0} = \top \testpdl $ and 
$\pi^{\leq n+1} = \top \testpdl \ndet (\pi ; \pi^{\leq n}) $. 
For a finite set of variables 
$P = \{p_i\}_{1 \leq  i \leq n}$ 
and associated programs 
$\{ \pi_i(p_i)\}_{1 \leq  i \leq n}$,
we use the notation
$\seqseq{p \in P} \pi(p)$ to denote the sequence
$ \pi_1(p_1) ; \cdots ; \pi_n(p_n)$, in some order. 
We will make use of this notation with care to guarantee that the ordering of the elements of $P$ does not matter. 

The set of propositional variables occurring in a formula $\phi$ 
is noted $\propsetOf \phi $ and 
the set of those occurring in a program $\pi$ is noted $\propsetOf \pi $. 
For example, $\propsetOf{ p \lor \ldia{\assgntopV{q} } \lnot r } = \{p,q,r\}$.

\section{Semantics}\label{sec:interpretation} 


Let $\modl = \tuple{\readset,\writeset,\valuset}$ be a model. 
Formulas are interpreted as sets of models: 
\begin{center}\begin{tabular}{lll}
$\modl \models \top$;
\\
$\modl \models p $ & iff & $p \in \valuset, \text{ for } p \in \propset$;
\\
$\modl \models \lnot \phi $ & iff & $\modl \not \models \phi $;
\\
$\modl \models \phi \lor \psi$ & iff & $\modl \models \phi $ or $\modl \models \psi$;
\\
$\modl \models \ldia \pi \phi $ & iff & there is a model $\modl'$ such that $\modl \intPgm{ \pi } \modl' $ and $\modl' \models \phi$.
\end{tabular}\end{center}
Programs are interpreted as relations on the set of models: 
\begin{align*}
&\modl \intPgm{ \assgntopV{p} } \modl'&\text{ iff }&\readset' = \readset\text{, }\writeset' = \writeset \text{, }\valuset' = \valuset \cup \{p\} \text{, and }p \in \writeset
\\
&\modl \intPgm{ \assgnbotV{p} } \modl'&\text{ iff }&\readset' = \readset\text{, }\writeset' = \writeset \text{, }\valuset' = \valuset \setminus \{p\} \text{, and }p \in \writeset
\displaybreak[0]\\
&\modl \intPgm{ \assgntopR{p} } \modl'&\text{ iff }&\readset' = \readset \cup \{p\} \text{, }\writeset' = \writeset \text{, and }\valuset'= \valuset
\\
&\modl \intPgm{ \assgnbotR{p} } \modl'&\text{ iff }&\readset' = \readset \setminus \{p\} \text{, }\writeset' = \writeset \setminus \{p\}\text{, and }\valuset'= \valuset
\\
&\modl \intPgm{ \assgntopW{p} } \modl'&\text{ iff }&\readset' = \readset \cup \{p\} \text{, }\writeset' = \writeset \cup \{p\} \text{, and }\valuset' = \valuset
\\
&\modl \intPgm{ \assgnbotW{p} } \modl'&\text{ iff }&\readset' = \readset\text{, }\writeset' = \writeset  \setminus \{p\} \text{, and }\valuset' = \valuset
\displaybreak[0]\\
&\modl \intPgm{ \phi \testpdl }\modl'&\text{ iff }&\modl = \modl'\text{ and }\modl \models \phi
\\
&\modl \intPgm{ \phi \testendo }\modl'&\text{ iff }&\modl = \modl'\text{ and }\modl'' \models \phi\text{ for every }\modl''\text{ such that }\modl'' \sim \modl
\displaybreak[0]\\
&\modl \intPgm{ \pi_1 ; \pi_2 } \modl'&\text{ iff }&\text{there is an }\modl''\text{ such that }\modl \intPgm{ \pi_1 } \modl'' \text{ and }
												\modl'' \intPgm{ \pi_2 } \modl'
\\
&\modl \intPgm{ \pi_1 \ndet \pi_2 } \modl'&\text{ iff }&\modl \intPgm{ \pi_1 } \modl'\text{ or }\modl \intPgm{ \pi_2 } \modl'
\\
&\modl \intPgm{ \pi^\ast } \modl'&\text{ iff }&\text{there is an }n \geq 0 \text{ such that }\modl \intPgm{ \pi } ^n \modl'
\\
&\modl \intPgm{ \pi_1 \pll \pi_2 } \modl'&\text{ iff }&\text{there are }\modl_1, \modl_2, \modl'_1, \modl'_2 \text{ such that }
\splt{\modl}{\modl_1} {\modl_2} \text{, }\mrg{\modl'}{\modl'_1} {\modl'_2} \text{,} \\
&&&\modl_1 \intPgm{ \pi_1 } \modl'_1 \text{, }
\readset_1 = \readset'_1 \text{, }\writeset_1 = \writeset'_1 \text{, }\valuset_1 \setminus \writeset_1 = \valuset'_1 \setminus \writeset'_1 \text{, }\\
&&&\modl_2 \intPgm{ \pi_2 } \modl'_2 \text{, }
\readset_2 = \readset'_2 \text{, }\writeset_2 = \writeset'_2 \text{, }\valuset_2 \setminus \writeset_2 = \valuset'_2 \setminus \writeset'_2
\end{align*}
In the interpretation of assignments of atomic formulas we require 
propositional variables to be modifiable, while 
readability and writability can be modified unconditionally.
When a variable is made writable then it is made readable, too, 
in order to guarantee the inclusion constraint on models; 
similarly when a variable is made unreadable. 
The interpretation of parallel composition $\pi_1 \pll \pi_2$ is such that both $\pi_1$ and $\pi_2$ only modify `their' variables. 
More precisely, parallel composition $\pi_1 \pll \pi_2$ of two programs $\pi_1$ and $\pi_2$ 
relates two models $\modl$ and $\modl'$ when the following conditions are satisfied:
(1)~$\modl$ can be split into $\modl_1$ and $\modl_2$; 
(2)~the execution of $\pi_1$ on $\modl_1$ may lead to $\modl_1'$ and 
    the execution of $\pi_2$ on $\modl_2$ may lead to $\modl_2'$;
(3)~$\modl_1'$ and $\modl_2'$ can be merged into $\modl'$. 
Moreover, 
(4) the modifications are legal: $\pi_1$ and $\pi_2$ neither change readability nor writability, and 
each of them only modifies variables that were allocated to it by the split. 

Figure~\ref{fig:ex:parallel} illustrates the interpretation of the parallel program $\assgnbotV p \pll \assgnbotV q$. 
Some more examples follow. 

\begin{figure}[t]
  \centering
  \begin{tikzpicture}[>=latex', join=bevel, initial text = , every node/.style=, scale=0.9]
  \node (tosplit) at (0bp, 0bp) [draw] {$\tuple{ \set{p,q,r} , \set{p,q} , \set{p,q,r} } $};

  \node (split1) at (80bp, 30bp) [draw] {$\tuple{ \set{p,r} , \set{p} , \set{p,q,r} } $};
  \node (split2) at (80bp, -30bp) [draw] {$\tuple{ \set{q,r} , \set{q} , \set{p,q,r} } $};

  \node (tomerge1) at (200bp, 30bp) [draw] {$\tuple{ \set{p,r} , \set{p} , \set{q,r} } $};
  \node (tomerge2) at (200bp, -30bp) [draw] {$\tuple{ \set{q,r} , \set{q} , \set{p,r} } $};
  \node (merged) at (280bp, 0bp) [draw] {$\tuple{ \set{p,q,r} , \set{p,q} , \set{r} } $};
  
  \draw[thick, dashed, ->] (tosplit) to node [] {} (split1.west);
  \draw[thick, dashed, ->] (tosplit) to node [] {} (split2.west);

  \draw[thick, ->] (split1) to node [above] {$\assgnbotV p$} (tomerge1);
  \draw[thick, ->] (split2) to node [above] {$\assgnbotV q$} (tomerge2);
  
  \draw[thick, dotted, ->] (tomerge1.east) to node [] {} (merged);
  \draw[thick, dotted, ->] (tomerge2.east) to node [] {} (merged);

\end{tikzpicture}

\caption{Illustration of an execution of $\assgnbotV p \pll \assgnbotV q$ at the model $\tuple{ \set{p,q,r} , \set{p,q} , \set{p,q,r} } $. 
}
\label{fig:ex:parallel} 
\end{figure}

\begin{example}
Suppose  
$\modl = \tuple{\readset,\writeset,\valuset}$ with $\writeset = \readset = \valuset = \{ p, q, r \}$. Then
$\modl' = \tuple{\readset,\writeset,\valuset'}$ with $\valuset' = \{ p, r\}$ is the only model such that 
$\modl \intPgm{  \assgntopV{p} \pll \assgnbotV{q} } \modl' $. 
\end{example}

The next example illustrates the last condition in the interpretation of parallel composition.

\begin{example}
The programs 
$\assgntopV p \pll \assgntopV p $ and
$\assgntopV p \pll (\assgntopW p ; \assgnbotV p ; \assgnbotR p ) $
cannot be executed on the model $\modl = \tuple{ \{p\}, \{p\}, \{p\} }$. 
For the second, suppose there are $\modl_1$ and $\modl_2$ such that $\splt{\modl}{\modl_1} {\modl_2}$ and 
suppose $\assgntopV p$ is executed on $\modl_1$ and 
$\assgntopW p ; \assgnbotV p ; \assgnbotR p$ on $\modl_2$.
For $\assgntopV p$ to be executable we must have $p \in \writeset_1$, and therefore
$p \notin \writeset_2$ (and a fortiori $p \notin \readset_2$) because of the RW-disjointness condition. 
Hence $\modl_1 = \tuple{ \{p\}, \{p\}, \{p\} }$ and $\modl_2 = \tuple{ \emptyset, \emptyset, \{p\} }$.
Then 
$\modl_1' = \modl_1$ is the only model such that $\modl_1 \intPgm{ \assgntopV p } \modl_1' $; and 
$\modl_2' = \tuple{ \emptyset, \emptyset, \emptyset }$ is the only model such that $\modl_2 \intPgm{ \assgntopW p ; \assgnbotV p ; \assgnbotR p } \modl_2' $.
These two models cannot be merged because 
$\valuset_2' \setminus \writeset_2' 
= \emptyset$ 
fails to be equal to 
$\valuset_2 \setminus \writeset_2 = \{p\}$. 
However, $\assgntopV p \pll ( \assgntopW p ; \assgntopV p ; \assgnbotR p )$
is executable on $\modl$: we have
$\splt{\modl}{\modl}{\tuple{\emptyset,\emptyset,\{p\}}}$ and 
$\modl \intPgm{\assgntopV p} \tuple{\{p\},\{p\},\{p\}}$ and
$\tuple{\emptyset,\emptyset,\{p\}} \intPgm{ \assgntopW p ; \assgntopV p ; \assgnbotR p } \tuple{\emptyset,\emptyset,\{p\} } $ and
$\mrg
{ \tuple{\{p\},\{p\},\{p\}} }
{ \tuple{\emptyset,\emptyset,\{p\}} }
{ \tuple{\{p\},\{p\},\{p\}} }$.
\end{example}

It is clear that parallel composition satisfies commutativity. 
It is less obvious that it is also associative. 
The proof is somewhat involved and can be found in the annex. 

Let us finally illustrate the different semantics of the two test operators of our logic.

\begin{example}
Suppose $\modl$ is such that $p \notin \readset$ and $p \in \valuset$. 
Then the program $p \testpdl$ is executable on $\modl$ because $p \in \valuset$.
In contrast, there is no $\modl'$ such that $\modl \intPgm{ p \testendo } \modl'$, 
the reason being that there is always an $\modl''$ such that $\modl \sim \modl''$ and $p \notin \valuset''$,
hence $p \testendo$ is inexecutable. 
\end{example}

In practice, parallel programs should only contain endogenous tests in order to avoid that a subprogram accesses the truth value of a variable that is not among its readable variables.
Actually we have kept PDL tests for technical reasons only:
we could not formulate some of the reduction axioms without them. 

Satisfiability and validity of formulas are defined in the expected way.

\begin{example}
The formulas 
$\phi \limp \lbox{ \top \testendo \pll \top \testendo } \phi$, 
$\lbox{ \assgntopV p \pll \assgnbotV p } \bot$,
$\lbox{ \assgntopV p \pll \assgntopV p } \bot$ and 
$\lbox{ p \testendo \pll \assgntopV p } \bot$ 
whose parallel programs were discussed in the introduction are all valid. 
\end{example}

The formulas $\ldia{ \assgntopV p } \top $ and $\ldia{ \assgnbotV p } \top $ both express that $p$ is writable. 
Moreover, $\ldia{ p \testendo} \top $ expresses that $p$ is true and readable, and 
$\ldia{ \lnot p \testendo} \top $ expresses that $p$ is false and readable;
therefore $\ldia{ p \testendo} \top \lor \ldia{ \lnot p \testendo} \top $ 
expresses that $p$ is readable. 
This will be instrumental in our axiomatisation. 

Finally, for any model $\modl = \tuple{\readset, \writeset, \valuset}$ and $P \subseteq \propset$, we write
$\modl \modinter P$ for the model $\tuple{\readset \cap P, \writeset \cap P, \valuset \cap P}$.
This notation along with the following standard lemma
will be used several times in the remainder of this work.

\begin{lemma}\label{theo:irrelevantVariables}
If there is $P \subseteq \propset$ such that
$\modl_1 \modinter P = \modl_1' \modinter P$,
$\propsetOf \phi \subseteq P$ and $\propsetOf \pi \subseteq P$ then
\begin{enumerate}
  \item $\modl_1 \models \phi$ implies $\modl_1' \models \phi$; and
  \item $\modl_1 \intPgm \pi \modl_2$ implies $\modl_1' \intPgm \pi \modl_2'$,
        for $\modl_2'$ such that $\modl_2' \modinter P = \modl_2 \modinter P$ and
        $\modl_2' \modinter \left( \propset \setminus P\right) =
         \modl_1' \modinter \left( \propset \setminus P\right)$.
\end{enumerate}
\end{lemma}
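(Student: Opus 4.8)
The plan is to prove both parts simultaneously by mutual structural induction on the formula $\phi$ and the program $\pi$, since the two statements are intertwined through the $\ldia{\pi}\phi$ case and the test cases. The key observation to keep in mind throughout is that the hypothesis $\modl_1 \modinter P = \modl_1' \modinter P$ together with $\propsetOf{\phi} \subseteq P$ (resp. $\propsetOf{\pi} \subseteq P$) means that $\modl_1$ and $\modl_1'$ agree on everything that $\phi$ (resp. $\pi$) can ``see'', while possibly differing arbitrarily outside $P$.

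For part~(1), the Boolean cases ($\top$, $\lnot$, $\lor$) are immediate from the induction hypothesis, and the atomic case $\modl_1 \models p$ with $p \in P$ follows because $p \in \valuset_1 \cap P = \valuset_1' \cap P$. The modal case $\ldia{\pi}\psi$ uses part~(2): if $\modl_1 \intPgm{\pi} \modl_2$ with $\modl_2 \models \psi$, then part~(2) gives $\modl_1' \intPgm{\pi} \modl_2'$ with $\modl_2' \modinter P = \modl_2 \modinter P$; since $\propsetOf{\psi} \subseteq \propsetOf{\ldia{\pi}\psi} \subseteq P$, part~(1) applied to $\modl_2, \modl_2'$ yields $\modl_2' \models \psi$, hence $\modl_1' \models \ldia{\pi}\psi$. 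For part~(2), the atomic assignments are direct: e.g. for $\assgntopV{p}$ with $p \in P$, from $p \in \writeset_1$ and $\writeset_1 \cap P = \writeset_1' \cap P$ we get $p \in \writeset_1'$, and one checks the prescribed $\modl_2'$ (which equals $\modl_1'$ except with $p$ added to $\valuset$) indeed satisfies $\modl_1' \intPgm{\assgntopV{p}} \modl_2'$ and the stated constraints; the readability/writability assignments are similar and even simpler since they have no guard. The exogenous test $\psi\testpdl$ reduces to part~(1). The endogenous test $\psi\testendo$ requires a small extra argument: one must check that $\modl_1 \sim \modl_1'$-style indistinguishability interacts correctly with restriction to $P$ — specifically that for every $\modl''$ with $\modl'' \sim \modl_1'$ there is a corresponding model agreeing with $\modl_1$ on $P$ and with $\modl''$ off $P$, so that the universal quantification over $\sim$-variants transfers; here one uses $\propsetOf{\psi} \subseteq P$ to apply part~(1). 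The compound cases $\pi_1;\pi_2$, $\pi_1 \ndet \pi_2$ and $\pi^\ast$ are routine compositions of the induction hypothesis, threading the intermediate models through and noting that the ``off-$P$'' part is preserved at each step (for $;$ one must observe that the intermediate model produced by the first induction hypothesis agrees off $P$ with $\modl_1'$, so the second application keeps agreeing off $P$).

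The main obstacle is the parallel composition case $\pi_1 \pll \pi_2$. Given $\modl_1 \intPgm{\pi_1 \pll \pi_2} \modl_2$ witnessed by a split $\splt{\modl_1}{\modl_1^a}{\modl_1^b}$, sub-executions $\modl_1^a \intPgm{\pi_1} \modl_2^a$ and $\modl_1^b \intPgm{\pi_2} \modl_2^b$, and a merge $\mrg{\modl_2}{\modl_2^a}{\modl_2^b}$, I need to manufacture a corresponding witnessing diagram starting from $\modl_1'$. The natural move is to keep the readability and writability components of the split of $\modl_1'$ exactly as in the split of $\modl_1$ on the $P$-part, and to distribute the extra (off-$P$) readable/writable variables of $\modl_1'$ arbitrarily but respecting RW-disjointness — for instance putting all off-$P$ resources of $\modl_1'$ into, say, the first component, or better, reusing $\readset_1' \setminus P$ and $\writeset_1' \setminus P$ split in any RW-disjoint way; the valuation components all get $\valuset_1'$ by the split rule. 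Then $\modl_1^a$ and the new split-component $(\modl_1')^a$ agree on $\propsetOf{\pi_1}$ (which is $\subseteq P$), so the induction hypothesis for $\pi_1$ applies and gives $(\modl_1')^a \intPgm{\pi_1} (\modl_2')^a$ with the right $P$-agreement, and symmetrically for $\pi_2$. The delicate bookkeeping is then to verify: that the chosen split of $\modl_1'$ is genuinely a split (RW-disjointness of the off-$P$ parts, and that the unions give back $\readset_1', \writeset_1'$); that the ``legal modification'' side-conditions ($\readset$, $\writeset$ unchanged, $\valuset \setminus \writeset$ unchanged on each branch) still hold — these follow from the induction hypothesis on the $P$-part and from the fact that $\pi_1, \pi_2$ cannot touch off-$P$ variables combined with how part~(2)'s conclusion fixes the off-$P$ part; and finally that $(\modl_2')^a$ and $(\modl_2')^b$ merge, with the merge producing exactly the $\modl_2'$ demanded by the statement (i.e. $\modl_2' \modinter P = \modl_2 \modinter P$ and $\modl_2' \modinter(\propset \setminus P) = \modl_1' \modinter(\propset \setminus P)$). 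Checking the merge formula $\valuset_2' = (\valuset \cap \writeset)^a \cup (\valuset \cap \writeset)^b \cup (\valuset^a \cap \valuset^b)$ restricted and un-restricted to $P$ is where the computation is heaviest, but it is purely a matter of intersecting the corresponding identities for $\modl_2$ with $P$ and using that everything off $P$ has been copied verbatim from $\modl_1'$. I would present this case in full and treat the others telegraphically.
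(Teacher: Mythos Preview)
Your proposal is correct and follows the same approach as the paper, which merely states that the proof is ``by a straightforward simultaneous induction on the size of $\phi$ and $\pi$'' without elaborating any case. You supply substantially more detail than the paper does, in particular for the parallel composition case, which the paper does not treat at all.

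One small wording slip worth fixing: in the endogenous test case, the auxiliary model you need is one that is $\sim$-related to $\modl_1$ and agrees with the given $\modl''$ on $P$ (not ``agreeing with $\modl_1$ on $P$ and with $\modl''$ off $P$''). Concretely, given $\modl'' \sim \modl_1'$ you build $\tilde\modl$ with $\tilde\readset = \readset_1$, $\tilde\writeset = \writeset_1$, and $\tilde\valuset$ chosen so that $\tilde\valuset \cap \readset_1 = \valuset_1 \cap \readset_1$ and $\tilde\valuset \cap P = \valuset'' \cap P$; these two constraints are compatible on $\readset_1 \cap P$ because $\modl'' \sim \modl_1'$ and $\modl_1 \modinter P = \modl_1' \modinter P$. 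Then $\tilde\modl \sim \modl_1$ gives $\tilde\modl \models \psi$, and part~(1) transfers this to $\modl'' \models \psi$. Your underlying idea is right; only the phrasing needs adjusting.
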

\begin{proof}[sketch]
The proof is by a straightforward simultaneous
induction on the size of $\phi$ and $\pi$.
\end{proof}

\section{Axiomatisation via Reduction Axioms}\label{sec:axiomatisation}

We axiomatise the validities of our logic by means of reduction axioms, as customary in dynamic epistemic logics \cite{DitmarschHoekKooi07}. 
These axioms transform every formula into a boolean combination of propositional variables
and formulas of the form
$\ldia{ \assgntopV p } \top $ and 
$\ldia{ p \testendo} \top \lor \ldia{ \lnot p \testendo} \top $. 
The former expresses that $p$ is writable: we abbreviate it by $\writable{p}$;
the latter expresses that $p$ is readable: we abbreviate it by $\readable p$. 
Hence we have:
\begin{align*}
\writable{p} &\eqdef \ldia{ \assgntopV p } \top 
\\
\readable p &\eqdef \ldia{ p \testendo} \top \lor \ldia{ \lnot p \testendo} \top 
\end{align*}
The reduction starts by eliminating all the program operators from formulas, where 
the elimination of parallel composition is done by sequentialising it while keeping track of the values of the atoms. 
After that step, the only remaining program operators 
either occur in formulas of the form $\readable p$ or $\writable p$, 
or in modal operators of the form
$\ldia{ \assgntopV p} $,
$\ldia{ \assgnbotV p} $,
$\ldia{ \assgntopR p} $,
$\ldia{ \assgnbotR p} $,
$\ldia{ \assgntopW p} $, or
$\ldia{ \assgnbotW p} $. 
All these modal operators can be distributed over the boolean operators, taking advantage of the fact that all of them are deterministic modal operators 
(validating the Alt$_1$ axiom $\ldia \pi \phi \limp \lbox \pi \phi$). 
Finally, sequences of such modalities facing a propositional variable can be transformed into 
boolean combinations of readability and writability statements $\readable p$ and $\writable p$.
The only logical link between these statements is that 
writability of $p$ implies readability of $p$. 
This is captured by the axiom schema 
$\writable p \limp \readable p$.

The sequentialisation of parallel composition uses copies of variables, so we start by introducing that notion. 
We then define some programs and formulas 
that will allow us to formulate the reduction axioms more concisely. 

\subsection{Copies of Atomic Propositions}\label{sec:copyVars}

Our reduction axioms will introduce fresh copies of each propositional variable, one per occurrence of the parallel composition operator. 
The interpretation of parallel composition being based on separation, each concurrent program operates on its own model.
The copies emulate the separation of models: each concurrent program is executed on a set of copies of propositional variables.

In order to keep things readable we neglect that the copies should be indexed by programs
and denote the copies of the variable $p$ by $\cp k p$, where $\mathbf{k}$ is some integer. 
In principle we should introduce a bijection between the indexes $\mathbf{k}$ and the subprogram they are attached to;
we however do not do so to avoid overly complicated notations.

Given a set of propositional variables $P \subseteq \propset$ and an integer $\mathbf{k} \in \set{1,2}$, we define the set of copies
$\cp{k} P = \{ \cp{k} p \suchthat p \in P\}$. 
Similarly, we define copies of programs and formulas:
the program $\cp{k} \pi$ and the formula $\cp{k} \phi$ are obtained by replacing all their occurrences of
propositional variables $p$ by $\cp k p$. 
For example, $\cp{k} {(\assgntopV p ; q \testendo )}$ equals 
$ \assgntopV{ \cp{k} p } ; \cp{k} q \testendo $.

The following lemma will be instrumental in the soundness proof. 

\begin{lemma}\label{theo:copies}
Let $\pi$ be a program, 
let $\tuple{\readset,\writeset,\valuset}$ be a model, and 
let $\mathbf{k} \in \set{1,2}$. 
Then 
$$ \tuple{\readset,\writeset,\valuset} \intPgm{\pi} \tuple{\readset',\writeset',\valuset'} \text{ iff } 
\tuple{ \cp k {\readset},\cp k {\writeset},\cp k {\valuset}} \intPgm{\cp k \pi} \tuple{\cp k {\readset'},\cp k {\writeset'},\cp k {\valuset'}} . $$
\end{lemma}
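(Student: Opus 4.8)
The plan is to prove the biconditional by a straightforward induction on the structure of the program $\pi$, exploiting the fact that the renaming $p \mapsto \cp k p$ is injective and that the semantic clauses for each program constructor only refer to membership, union, intersection, set difference, and equality of the component sets $\readset, \writeset, \valuset$ — all of which are preserved and reflected by applying an injective renaming uniformly. It is convenient to introduce once and for all the observation that for any $P, Q \subseteq \propset$ we have $\cp k {(P \cup Q)} = \cp k P \cup \cp k Q$, $\cp k {(P \cap Q)} = \cp k P \cap \cp k Q$, $\cp k {(P \setminus Q)} = \cp k P \setminus \cp k Q$, $p \in P \iff \cp k p \in \cp k P$, and $P = Q \iff \cp k P = \cp k Q$; and dually that every subset of $\cp k \propset$ is of the form $\cp k P$ for a unique $P \subseteq \propset$. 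The latter surjectivity remark is what lets us pass from an arbitrary witness model on the right-hand side back to a model over the original variables.

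For the base cases, I would run through the six atomic assignments $\assgntopV p$, $\assgnbotV p$, $\assgntopR p$, $\assgnbotR p$, $\assgntopW p$, $\assgnbotW p$: in each case the defining condition is a conjunction of equations among $\readset', \writeset', \valuset'$ built from $\readset, \writeset, \valuset, \{p\}$ by $\cup, \setminus$ together with a side condition like $p \in \writeset$, and applying $\cp k{(-)}$ to each equation and using $\cp k {\{p\}} = \{\cp k p\}$ gives exactly the defining condition of $\cp k \pi = \assgntopV{\cp k p}$ etc.; conversely any witness $\tuple{\readset', \writeset', \valuset'}$ for $\cp k \pi$ has all three components subsets of $\cp k \propset$ (by those same equations, since $\cp k \readset$ etc. are), hence of the form $\cp k{(-)}$, and we descend. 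For the test cases $\phi \testpdl$ and $\phi \testendo$ I would invoke a companion fact, proved by a trivial induction on $\phi$ in parallel, that $\modl \models \phi \iff \cp k \modl \models \cp k \phi$ (where $\cp k \modl$ denotes $\tuple{\cp k \readset, \cp k \writeset, \cp k \valuset}$); for the endogenous test one also needs that $\modl'' \sim \modl$ ranges, under $\modl'' \mapsto \cp k{\modl''}$, exactly over the models $\sim$-related to $\cp k \modl$ inside the copied vocabulary — again because $\sim$ is defined purely by equalities of $\readset, \writeset$ and of $\valuset \cap \readset$, all reflected by the injective renaming, and because every model over $\cp k \propset$ is a copy.

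The inductive cases for $\pi_1 ; \pi_2$, $\pi_1 \ndet \pi_2$, and $\pi^\ast$ are immediate from the induction hypothesis: sequential composition threads an intermediate model $\modl''$, which on the right-hand side lives over $\cp k \propset$ and is therefore a copy $\cp k{\modl''_0}$, letting us apply the hypothesis twice; nondeterministic choice is a disjunction; the star is the union over $n \geq 0$ of the $n$-fold compositions, handled by a side induction on $n$. The parallel case $\pi_1 \pll \pi_2$ is the most delicate but still goes through mechanically: its semantic clause asserts the existence of $\modl_1, \modl_2, \modl'_1, \modl'_2$ with $\splt{\modl}{\modl_1}{\modl_2}$, $\mrg{\modl'}{\modl'_1}{\modl'_2}$, the two recursive relations, and the equalities $\readset_i = \readset'_i$, $\writeset_i = \writeset'_i$, $\valuset_i \setminus \writeset_i = \valuset'_i \setminus \writeset'_i$; since $\splt{}{}{}$ and $\mrg{}{}{}$ (and RW-disjointness) are again defined solely by unions, intersections, differences and equalities of the component sets, applying $\cp k{(-)}$ to every witness turns a left-hand instance into a right-hand one, and conversely any right-hand witnesses $\modl_1, \modl_2, \modl'_1, \modl'_2$ over $\cp k \propset$ are copies of unique models over $\propset$ to which we apply the induction hypothesis for $\pi_1$ and $\pi_2$.

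The main obstacle, such as it is, is bookkeeping rather than conceptual: one must be careful that in the inductive cases the intermediate and witness models produced on the copied side genuinely lie entirely within the vocabulary $\cp k \propset$ — i.e. have all components $\subseteq \cp k \propset$ — so that the surjectivity of $\cp k{(-)}$ onto subsets of $\cp k \propset$ can be applied to recover pre-images; this is where one uses that $\cp k \pi$ contains only copied variables and that each semantic clause only ever adds or removes copied variables to/from components that are already subsets of $\cp k \propset$. I would state the needed auxiliary facts about $\cp k{(-)}$ (homomorphism for $\cup, \cap, \setminus$, reflection of membership and equality, surjectivity onto subsets of $\cp k \propset$) as a short preliminary remark, prove the formula/model compatibility $\modl \models \phi \iff \cp k \modl \models \cp k \phi$ by the same simultaneous induction, and then the program cases fall out uniformly.
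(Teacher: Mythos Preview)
Your proposal is correct and follows essentially the same approach as the paper: a simultaneous structural induction on programs and formulas, with the companion hypothesis $\modl \models \phi \iff \tuple{\cp k \readset,\cp k \writeset,\cp k \valuset} \models \cp k \phi$. The paper's sketch merely names this induction and appeals to ``the substitution rule'', whereas you spell out the individual cases and the bookkeeping about intermediate and witness models staying inside $\cp k \propset$ --- more explicit, but not a different route.
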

\begin{proof}[sketch]
Just as for Lemma~\ref{theo:irrelevantVariables}, the proof is by simultaneous induction on the form of programs and formulas, where the induction hypothesis for the latter is that 
$\tuple{\readset, \writeset, \valuset} \models \phi$ if and only if
$\tuple{\cp k \readset, \cp k \writeset, \cp k \valuset} \models \cp k \phi$.
It relies on the fact that \DlpaPll satisfies the substitution rule.
\end{proof}

Moreover, in order to simulate the semantics of the parallel composition operator our reduction axioms associate to each copy $\cp k p$ two fresh variables $\cpr k p$ and $\cpw k p$,
denoting whether $\cp k p$ was respectively readable and writable just after the split.
Given a set of propositional variables $P$, we define 
$$\storeset(P) =
\set{ \cpr k p \suchthat k \in \set{1,2}, p \in P} \cup
\set{ \cpw k p \suchthat k \in \set{1,2}, p \in P}$$ 
as the set of all these fresh variables.

\subsection{Useful Programs and formulas}\label{sec:usefulFml}

Let $P \subseteq \propset$ be some finite set of propositional variables. 
Table~\ref{fig:useful_programs} lists programs and formulas that will be useful to concisely formulate the reduction axioms.
Observe that the order of the variables in the sequential compositions $ \seqseq{p \in P} ( \cdots ) $ occurring in the above programs does not matter. 
Observe also that the only endogenous tests on the right hand side occur in readability statements $\readable p$. 
(Remember that $\readable p$ abbreviates $\ldia{ p \testendo} \top \lor \ldia{ \lnot p \testendo} \top $.)

\begin{table}[t]
\begin{align*}
\progsplit(P) =&\ \seqseq{p \in P} \Big(
  \assgntopW{\cp 1 p} ; \assgntopW{\cp 2 p} ;
\Big(
  \big( p \testpdl ; \assgntopV{ \cp{1}{p} } ; \assgntopV{ \cp{2}{p} } \big) \ndet 
  \big( \lnot p \testpdl ; \assgnbotV{ \cp{1}{p} } ; \assgnbotV{ \cp{2}{p} } \big) 
\Big) ; \assgnbotR{\cp 1 {p}} ; \assgnbotR{\cp 2 {p}} ;
\\& ~~~~~~~~~~~
\Big(
  \lnot \readable p  \testpdl \ndet
  \big(\writable{p} \testpdl ; ( \assgntopW{\cp 1 {p}} \ndet \assgntopW{\cp 2 {p}} ) \big) 			 \ndet
  \\& ~~~~~~~~~~~~~
  \big(\lnot \writable{p} {\land} \readable p  \testpdl ; \big( \assgntopR{\cp 1 {p}} \ndet \assgntopR{\cp 2 {p}} \ndet 
																			  (\assgntopR{\cp 1 {p}}  ; \assgntopR{\cp 2 {p}}) \big) \big) 
\Big)
\Big)
\\ 
\progStore(P) =&\ \seqseq{p \in P} \seqseq{k \in \{1, 2\}} \Big(
  \assgntopW{\cpr k p} ; \assgntopW{\cpw k p} ;
\\& ~~~~~~~~~~~ 
  \big( \assgnpropV{\readable{\cp k p}}{\cpr k p} \big) ; \big( \assgnpropV{\writable{\cp k p}}{\cpw k p} \big)
\Big)
\\ 
\progOkChange(P) =&\ \bigwedge_{p \in P} \bigwedge_{k \in \{1, 2\}} \Big(
( \cpr k p \leqv \readable{ \cp k {p} } ) 	\land 
( \cpw k p \leqv \writable{ \cp k {p} } ) 	\land 	
( \lnot \cpw k {p} \limp (p \leqv \cp k {p}) )
\Big)
\\ 
\progmerge(P) =&\ \seqseq{p \in P} \Big(\big( 
\lnot \writable{p} \testpdl \ndet 
\\& ~~~~~~~~~~~ 
\big(	(\writable{ \cp{1}{p} } \land \cp{1}{p}) \lor 
		(\writable{ \cp{2}{p} } \land \cp{2}{p}) \testpdl ; \assgntopV p \big) \ndet 
\\& ~~~~~~~~~~~ 
\big(	(\writable{ \cp{1}{p} } \land \lnot \cp{1}{p}) \lor 
		(\writable{ \cp{2}{p} } \land \lnot \cp{2}{p}) \testpdl ; \assgnbotV p \big)
\big) ;
\\& ~~~~~~~~
\seqseq{k \in \{1, 2\}} \big(
  \assgnbotV{\cpr k p} ; \assgnbotV{\cpw k p} ; \assgntopW{\cp k p} ; \assgnbotV{\cp k p} ;
  \assgnbotR{\cpr k p} ; \assgnbotR{\cpw k p} ; \assgnbotR{\cp k p}
\big)
\Big)
\\ 
\progFlatten(\pi_1, \pi_2) =&\ %
  \progsplit\left(\propsetOf{\pi_1 \pll \pi_2}\right) ;
  \progStore\left(\propsetOf{\pi_1 \pll \pi_2}\right) ;
  \cp 1 {\pi_1} ; \cp 2 {\pi_2} ;
  \progOkChange\left(\propsetOf{\pi_1 \pll \pi_2}\right) \testpdl ;
  \progmerge\left(\propsetOf{\pi_1 \pll \pi_2}\right)
\end{align*}
\caption{Useful programs and formulas, for all finite $P \subseteq \propset$ and all programs $\pi_1$ and $\pi_2$.
\label{fig:useful_programs}
}
\end{table}

The $\progsplit(P)$ program simulates the split operation by 
(1) assigning the truth value of every $p \in P$ to its copies $\cp 1 p$ and $\cp 2 p$ and
(2) non-deterministically assigning
two copies of each read and write variable in a way such that a counterpart of the RW-disjointness condition
$\writeset_1 \cap \readset_2 = \writeset_2 \cap \readset_1 = \emptyset$ is guaranteed.  
Note that the assignments $\assgntopW{ \cp{k}{p} }$ also make $\cp{k}{p} $ readable.
The following lemma formally states the main property of $\progsplit(P)$.
It can easily be proved by following the previous observations.

\begin{lemma}\label{lem:progsplit}
For all $P \subseteq \propset$, and all models $\modl$, $\modl_1$ and $\modl_2$ such that
$\modl \modinter P = \modl$,
$$ \splt \modl {\modl_1} {\modl_2} \text{ if and only if }
\modl \intPgm{\progsplit(P)} \modl'$$ with
$\readset' = \readset \cup \cp 1 {\readset_1} \cup \cp 2 {\readset_2}$,
$\writeset' = \writeset \cup \cp 1 {\writeset_1} \cup \cp 2 {\writeset_2}$, and
$\valuset' = \valuset \cup \cp 1 {\valuset_1} \cup \cp 2 {\valuset_2}$.
\end{lemma}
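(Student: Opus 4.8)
The plan is to prove the two directions of the biconditional separately, in both cases by unfolding the definition of $\progsplit(P)$ and tracking how the sequential composition $\seqseq{p \in P}(\cdots)$ processes the variables of $P$ one at a time. Since $\modl \modinter P = \modl$, we have $\readset, \writeset, \valuset \subseteq P$, so every variable touched by the outer loop is in $P$, and the copies $\cp 1 p, \cp 2 p$ are fresh (disjoint from $P$ and from each other). The key observation, to be stated as a sub-claim and proved by induction on an enumeration $p_1, \dots, p_n$ of $P$, is that after processing $p_i$ the original components restricted to $P$ are unchanged, the copies $\cp 1 {p_j}, \cp 2 {p_j}$ for $j \le i$ have been set to their final values, and the copies for $j > i$ are still untouched; this is what justifies the remark in the text that the order of variables in the loop does not matter.

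First I would analyse the block executed for a single $p \in P$. Step (1), the assignments $\assgntopW{\cp 1 p}; \assgntopW{\cp 2 p}$ followed by the nondeterministic choice on $p \testpdl$ versus $\lnot p \testpdl$, sets $\cp 1 p, \cp 2 p$ to be writable-and-readable and sets their valuation to agree with that of $p$; this is deterministic given $\modl$ and forces $\cp k p \in \cp k {\valuset_k}$ iff $p \in \valuset$, matching the constraint $\valuset = \valuset_1 = \valuset_2$ in the definition of split. Then $\assgnbotR{\cp 1 p}; \assgnbotR{\cp 2 p}$ makes both copies unreadable and unwritable, resetting them to the ``empty'' read/write state. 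Step (2), the three-way nondeterministic choice guarded by $\lnot\readable p$, $\writable p$, and $\lnot\writable p \land \readable p$ (which is exhaustive and mutually exclusive, one of whose guards holds at $\modl$ since $\modl \models \readable p$ iff $p \in \readset$ and $\modl \models \writable p$ iff $p \in \writeset$, using the reading of $\readable{\cdot}, \writable{\cdot}$ established in Section~\ref{sec:interpretation}), then nondeterministically distributes the readability/writability of $p$ to exactly one of the two copies: if $p$ is writable it is made writable in $\cp 1 p$ or in $\cp 2 p$ (and hence readable there); if $p$ is readable but not writable, it is made readable in $\cp 1 p$, or $\cp 2 p$, or both; if $p$ is not readable, nothing is distributed. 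A direct inspection shows this matches exactly the possible pairs $(\readset_1 \cap \{p\}, \writeset_1 \cap \{p\})$, $(\readset_2 \cap \{p\}, \writeset_2 \cap \{p\})$ arising from an RW-disjoint split with $\readset_1 \cup \readset_2 = \readset$, $\writeset_1 \cup \writeset_2 = \writeset$ restricted to $\{p\}$: RW-disjointness forbids $p$ being writable in one copy and readable in the other, which is precisely why $p$ writable forces it into a single copy, and $p$ readable-not-writable allows it in one or both.

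For the ($\Rightarrow$) direction I would, given $\splt \modl {\modl_1}{\modl_2}$, assemble a run of $\progsplit(P)$ by making, for each $p$, the choice in step (1) according to whether $p \in \valuset$ and the choice in step (2) according to where $p$'s readability/writability landed in $\modl_1, \modl_2$; the per-variable analysis above shows this run ends in the stated $\modl'$. For ($\Leftarrow$), given a run ending in some $\modl'$, I would read off $\modl_1, \modl_2$ from the copy-components of $\modl'$ — that is, $\readset_i = (\cp i {})^{-1}(\readset' \cap \cp i P)$, etc., with $\valuset_i = \valuset$ — and check from the per-variable analysis that the RW-disjointness and union conditions hold and that $\modl'$ has the claimed form; here the text's remark that $\assgntopW{\cp k p}$ also makes $\cp k p$ readable is what guarantees $\writeset_i \subseteq \readset_i$, so that $\modl_1, \modl_2$ are genuine models. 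I expect the main obstacle to be purely bookkeeping: verifying carefully that the three guarded branches in step (2), together with the preceding reset $\assgnbotR{\cp k p}$, enumerate exactly the RW-disjoint local read/write configurations for a single variable and nothing more — in particular that no spurious split (e.g.\ one violating $\writeset_i \subseteq \readset_i$ or RW-disjointness) is produced and none is missed — and then lifting this single-variable correspondence through the sequential loop using the freshness and disjointness of the copies.
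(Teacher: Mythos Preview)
Your proposal is correct and follows the same approach as the paper: the paper does not give a detailed proof but merely states that the lemma ``can easily be proved by following the previous observations,'' those observations being precisely the two-step description of $\progsplit(P)$ (copying valuations, then nondeterministically distributing read/write status subject to RW-disjointness) that you unfold and track per variable. Your per-variable case analysis and the induction on the enumeration of $P$ are exactly the bookkeeping the paper leaves implicit.
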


The $\progStore(P)$ program stores the readability and writability states of the
copies of the propositional variable into some fresh variables.
These variables are then used only in $\progOkChange(P)$.

The formula $\progOkChange(P)$ compares the current state with the state just after the split.
It is true if and only if
(1) readability values are identical, 
(2) writability values are identical, and
(3)~truth values are identical for non-writable variables.

The $\progmerge(P)$ program simulates the merge operation by reinstating all those 
read- and write-atoms that had been allocated to the first subprogram in the sequentialisation.
The following lemma formally states the main property of $\progmerge(P)$.
This lemma is weaker than Lemma~\ref{lem:progsplit} for $\progsplit(P)$.
The additional hypotheses are guaranteed to hold by the interplay of programs $\progsplit(P)$ and $\progStore(P)$,
and formula $\progOkChange(P)$.

\begin{lemma}\label{lem:progmerge}
Let $\modl$, $\modl_1$ and $\modl_2$ be models, and $P$ a set of propositional variables
such that $\modl \modinter P = \modl$,
        $\modl_1$ and $\modl_2$ are RW-disjoint,
        $\readset = \readset_1 \cup \readset_2$,
        $\writeset = \writeset_1 \cup \writeset_2$, and
        $\valuset_1 \setminus \writeset = \valuset_2 \setminus \writeset$.
Then
$$\mrg \modl {\modl_1} {\modl_2} \text{ if and only if }
\modl' \intPgm{\progmerge(P)} \modl$$ with
$\readset' = \readset \cup \cp 1 {\readset_1} \cup \cp 2 {\readset_2} \cup \storeset(P)$,
$\writeset' = \writeset \cup \cp 1 {\writeset_1} \cup \cp 2 {\writeset_2} \cup \storeset(P)$, and
$\valuset' = \valuset^\sharp \cup \cp 1 {\valuset_1} \cup \cp 2 {\valuset_2}$
where $\valuset^\sharp$ is any subset of $P \cup \storeset(P)$ such that
$\valuset^\sharp \setminus \writeset = \valuset \setminus \writeset$.
\end{lemma}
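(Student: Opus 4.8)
The plan is to unfold the definition of $\progmerge(P)$ as an ordered sequential composition over $p \in P$, and to track exactly how each $\seqseq{p\in P}(\cdots)$ block transforms a model. Since the inner blocks for distinct variables $p$ touch disjoint sets of atoms (namely $p$, $\cp1p$, $\cp2p$, $\cpr1p$, $\cpr2p$, $\cpw1p$, $\cpw2p$ and only these), Lemma~\ref{theo:irrelevantVariables} lets me argue that the order of the blocks is immaterial and reduce the whole claim to a per-variable analysis: for each $p \in P$ I must show that running the single block for $p$ backwards-relates the candidate $\modl'$ to $\modl$ exactly when the merge conditions on the $p$-components hold. The global statement then follows by composing these independent per-variable transformations and collecting the resulting $\readset$, $\writeset$, $\valuset$ components.

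For the per-variable step I would do a case split on the first nondeterministic choice inside the block, which mirrors the three clauses of the definition of merge restricted to $p$: either $p \notin \writeset$ (the $\lnot\writable p\testpdl$ branch), in which case $p$ keeps its value and that value must already agree with $\valuset_1\setminus\writeset = \valuset_2\setminus\writeset$ on the non-writable part --- matching the requirement $\valuset^\sharp\setminus\writeset = \valuset\setminus\writeset$; or $p$ is writable and writable-and-true in one of the two copies (the $\assgntopV p$ branch), forcing $p \in \valuset$; or $p$ is writable and writable-and-false in one copy (the $\assgnbotV p$ branch), forcing $p \notin\valuset$. Here I use the hypotheses that $\modl_1,\modl_2$ are RW-disjoint with $\writeset = \writeset_1\cup\writeset_2$: RW-disjointness guarantees that $p$ cannot be simultaneously writable in both copies, so the two $\assgntopV p / \assgnbotV p$ branches cannot both apply, and together with $\valuset_1\setminus\writeset = \valuset_2\setminus\writeset$ this makes the branch chosen deterministic and exactly reproduce the merge value $\valuset = (\valuset_1\cap\writeset_1)\cup(\valuset_2\cap\writeset_2)\cup(\valuset_1\cap\valuset_2)$ on the coordinate $p$. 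The trailing $\seqseq{k\in\{1,2\}}(\cdots)$ sub-block then deterministically erases all the auxiliary atoms $\cpr kp$, $\cpw kp$ and the copies $\cp kp$ from readability, writability and valuation, which accounts for the disappearance of $\cp1{\readset_1}\cup\cp2{\readset_2}\cup\storeset(P)$ from $\readset'$ down to $\readset$, and likewise for $\writeset'$ and $\valuset'$. Running this backwards (the lemma states $\modl'\intPgm{\progmerge(P)}\modl$, i.e. $\modl$ is the \emph{output}) means I read each deterministic assignment as a constraint on $\modl'$: this is where the $\readset' = \readset \cup\cp1{\readset_1}\cup\cp2{\readset_2}\cup\storeset(P)$ shape and the freedom in $\valuset^\sharp$ come from, since the non-writable part of $p$ in $\modl'$ is never touched by the block and hence is unconstrained beyond $\valuset^\sharp\setminus\writeset = \valuset\setminus\writeset$.

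The main obstacle I expect is the bookkeeping around the $\lnot\writable p$ branch versus the writable branches: I must check that when $p\in\writeset$ the block genuinely \emph{cannot} take the $\lnot\writable p\testpdl$ branch (so that a definite value is assigned) and that when $p\notin\writeset$ exactly one of the three branches --- the first one --- is enabled, relying on $\writable p$ being evaluated in $\modl'$ where $p$'s writability equals that in $\modl$ (it is $\cp k p$, not $p$, whose write-status was stored, so I need $\writeset\cap P$ unchanged between $\modl$ and $\modl'$, which is part of the hypothesis $\modl\modinter P = \modl$ combined with the shape of $\writeset'$). I also need to confirm that the two subprograms' writable copies, as reflected in $\modl'$ via $\cp1{\writeset_1},\cp2{\writeset_2}$, make the tests $\writable{\cp1p}\land\cp1p$ etc. evaluate correctly --- this is a direct consequence of $\modl'$ containing $\cp1{\valuset_1}\cup\cp2{\valuset_2}$ in its valuation and $\cp1{\writeset_1}\cup\cp2{\writeset_2}$ in its writability set. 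Once these enabling conditions are pinned down, the remainder is a routine verification that the assigned value equals the merge formula coordinatewise, carried out separately in the ``only if'' and ``if'' directions.
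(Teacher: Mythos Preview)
Your plan is correct and is exactly the natural way to unpack the paper's proof, which is a one-line sketch: the paper simply observes that, under the listed hypotheses, the merge relation reduces to the single remaining clause $\valuset = (\valuset_1 \cap \writeset_1) \cup (\valuset_2 \cap \writeset_2) \cup (\valuset_1 \cap \valuset_2)$, and then declares the equivalence between this clause and $\modl' \intPgm{\progmerge(P)} \modl$ to be ``straightforward''. Your per-variable decomposition, the three-way case split on the nondeterministic branch (governed by $p \in \writeset$ versus $p \notin \writeset$, with RW-disjointness ensuring at most one of $\writeset_1,\writeset_2$ contains $p$), and the verification that the trailing cleanup block deterministically erases the copies and store variables---this is precisely what ``straightforward'' means here, and there is no alternative route in the paper to compare against.
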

\begin{proof}[sketch]
It suffices to prove that
$\modl' \intPgm{\progmerge(P)} \modl$ if and only if
$\valuset = (\valuset_1 \cap \writeset_1) \cup (\valuset_2 \cap \writeset_2) \cup (\valuset_1 \cap \valuset_2) $,
which is straightforward.
\end{proof}

Finally, the $\progFlatten(\pi_1, \pi_2)$ program emulates the execution of the program $\pi_1 \pll \pi_2$ as
a sequential composition of the previous programs.
Notice that there is no occurrence in $\progFlatten(\pi_1, \pi_2)$ of the parallel composition operator,
except possibly inside $\pi_1$ and $\pi_2$.
Lemma~\ref{lem:pllequivalence} below states that the emulation is faithful.
We first need Lemma~\ref{lem:progClean},
which can be seen as an adaptation of Lemma~\ref{theo:irrelevantVariables} to $\progFlatten(P)$.

\begin{lemma}\label{lem:progClean}
For all models $\modl$ and $\modl'$, and all programs $\pi_1$ and $\pi_2$,
$$
\modl \intPgm{\progFlatten(\pi_1,\pi_2)} \modl' \text{ iff }
(\modl \modinter \propsetOf{\pi_1 \pll \pi_2}) \intPgm{\progFlatten(\pi_1, \pi_2)} (\modl' \modinter \propsetOf{\pi_1 \pll \pi_2}).
$$
\end{lemma}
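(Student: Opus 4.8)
The statement is a "relevant variables" locality result for $\progFlatten$, so the natural strategy is to reduce it to Lemma~\ref{theo:irrelevantVariables} applied to the program $\progFlatten(\pi_1,\pi_2)$ together with the set $P^{\sharp} \eqdef \propsetOf{\progFlatten(\pi_1,\pi_2)}$, which consists of $\propsetOf{\pi_1 \pll \pi_2}$, its two copy-sets $\cp 1 {\propsetOf{\pi_1 \pll \pi_2}}$, $\cp 2 {\propsetOf{\pi_1 \pll \pi_2}}$, and the auxiliary store-variables $\storeset(\propsetOf{\pi_1 \pll \pi_2})$. The plan is: first establish that on the two extra "blocks" of variables — the copies and the store variables — the program $\progFlatten$ always both starts and ends with them in a fixed, canonical configuration, so that restricting $\modl$ to $\propsetOf{\pi_1 \pll \pi_2}$ loses no information that $\progFlatten$ can use; then invoke Lemma~\ref{theo:irrelevantVariables} to transport executions in both directions.

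**Key steps.** First I would observe that $\progsplit(P)$ begins by overwriting every copy $\cp k p$ and every bit of its readability/writability (via the leading $\assgntopW{\cp k p}$ followed later by $\assgnbotR{\cp k p}$ and the non-deterministic reallocation), so the initial values of the copy-variables are irrelevant; likewise $\progStore(P)$ overwrites every $\cpr k p, \cpw k p$ at its first step, so the initial values of the store-variables are irrelevant. Dually, I would check the tail: $\progmerge(P)$ ends each iteration with the block $\assgnbotV{\cpr k p};\assgnbotV{\cpw k p};\assgntopW{\cp k p};\assgnbotV{\cp k p};\assgnbotR{\cpr k p};\assgnbotR{\cpw k p};\assgnbotR{\cp k p}$, which drives every copy- and store-variable of $P$ back to the canonical state "not readable, not writable, false". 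Hence in any run of $\progFlatten(\pi_1,\pi_2)$, the restriction to $\propset \setminus \propsetOf{\pi_1\pll\pi_2}$ is constant in a double sense: the copy/store part is reset to canonical at the end regardless of input, and the part of $\propset$ outside $P^{\sharp}$ is untouched throughout. Combining these two facts gives exactly the hypothesis shape of Lemma~\ref{theo:irrelevantVariables}(2): for $P = P^{\sharp}$, if $\modl_a \modinter P^{\sharp} = \modl_b \modinter P^{\sharp}$ and $\modl_a \intPgm{\progFlatten(\pi_1,\pi_2)} \modl_a'$ then $\modl_b \intPgm{\progFlatten(\pi_1,\pi_2)} \modl_b'$ for the corresponding $\modl_b'$. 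Then one direction of the "iff" is: starting from $\modl$, the copy/store-block is overwritten immediately, so the run factors through $\modl \modinter P^{\sharp}$, and since $\modl \modinter \propsetOf{\pi_1\pll\pi_2}$ and $\modl$ agree on $\propsetOf{\pi_1\pll\pi_2}$ while both have the copy/store-block reset before any test reads it, Lemma~\ref{theo:irrelevantVariables} gives the execution on the restricted model. For the converse, I would use the tail-reset observation: a run out of $\modl \modinter \propsetOf{\pi_1\pll\pi_2}$ ends with all copy/store-variables canonical and with $\propset\setminus P^{\sharp}$ equal to its (empty-on-those-coordinates) input, which is precisely $\modl' \modinter \propsetOf{\pi_1\pll\pi_2}$ on the right-hand side after one checks that $\progFlatten$ leaves $\propset \setminus P^{\sharp}$ pointwise fixed; lifting back to $\modl$ is again Lemma~\ref{theo:irrelevantVariables} in the other direction.

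**Main obstacle.** The delicate point is not the locality argument itself but the bookkeeping that the copy- and store-variables really are in the canonical state exactly when they need to be — in particular that no endogenous test inside $\progsplit(P)$, $\progStore(P)$ or $\progmerge(P)$ reads a copy-variable before it has been written, and that $\cp 1 {\pi_1}$ and $\cp 2 {\pi_2}$ only touch copies of variables in $\propsetOf{\pi_1\pll\pi_2}$ (which is where Lemma~\ref{theo:copies} is needed). One must also be careful that $\progOkChange(P)$ is a test, hence only constrains the run without modifying state, so it does not interfere with the reset behaviour. I expect the write-up to be mostly a careful but routine induction-free case analysis on the block structure of $\progFlatten$, reusing Lemmas~\ref{theo:irrelevantVariables} and~\ref{theo:copies}; the only real work is making the phrase "starts and ends in canonical configuration on the auxiliary variables" precise enough to feed into Lemma~\ref{theo:irrelevantVariables} cleanly in both directions.
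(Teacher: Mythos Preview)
Your approach is correct and coincides with the paper's: its proof sketch records precisely your two observations, namely that every auxiliary variable in $\propsetOf{\progFlatten(\pi_1,\pi_2)} \setminus \propsetOf{\pi_1 \pll \pi_2}$ is (1) made writable and initialised by $\progsplit$ or $\progStore$, and (2) set to false and made unreadable by $\progmerge$. Your write-up spells out in more detail how these feed into Lemma~\ref{theo:irrelevantVariables}; one small remark is that the fact that $\cp k {\pi_k}$ only touches variables in $\cp k {\propsetOf{\pi_1 \pll \pi_2}}$ is immediate from the definition of the copy operation and does not require Lemma~\ref{theo:copies}.
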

\begin{proof}[sketch]
It suffices to observe that all variables in $\propsetOf{\progFlatten(\pi_1, \pi_2)} \setminus \propsetOf{\pi_1 \pll \pi_2}$ are 
(1) made writable and initialized by $\progsplit$ or $\progStore$, and 
(2) set to false and made unreadable by $\progmerge$.
\end{proof}

We can now prove our main lemma.

\begin{lemma}\label{lem:pllequivalence}
For all models $\modl$ and $\modl'$, and all programs $\pi_1$ and $\pi_2$,
$$
\modl \intPgm{\pi_1 \pll \pi_2} \modl' \text{ if and only if }
\modl \intPgm{\progFlatten(\pi_1, \pi_2)} \modl' \text{.}
$$
\end{lemma}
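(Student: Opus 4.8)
The plan is to exhibit, factor by factor, a correspondence between the sequential program $\progFlatten(\pi_1,\pi_2)$ and the semantic clause for $\pi_1 \pll \pi_2$, using for each factor the lemma already established for it and treating $\pi_1,\pi_2$ as black boxes, so that no induction on their structure is needed (the passage from $\pi_i$ to its copy $\cp i{\pi_i}$ is handled once and for all by Lemma~\ref{theo:copies}). Write $P = \propsetOf{\pi_1 \pll \pi_2}$. First I would reduce to the ``clean'' case $\modl = \modl \modinter P$ and $\modl' = \modl' \modinter P$: on the right-hand side this is exactly Lemma~\ref{lem:progClean}, and on the left-hand side it follows from Lemma~\ref{theo:irrelevantVariables} together with the observation that $\pi_1 \pll \pi_2$ cannot affect variables outside $P$ (splitting and merging preserve the valuation there and $\pi_i$ only mentions $\propsetOf{\pi_i} \subseteq P$). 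After this reduction all models in play range over the finite alphabet $P \cup \cp 1 P \cup \cp 2 P \cup \storeset(P)$.

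For the implication from left to right, suppose $\modl \intPgm{\pi_1 \pll \pi_2} \modl'$ and fix witnesses $\modl_1,\modl_2,\modl_1',\modl_2'$ as in the semantic clause, i.e.\ $\splt \modl {\modl_1}{\modl_2}$, $\mrg {\modl'}{\modl_1'}{\modl_2'}$, $\modl_i \intPgm{\pi_i} \modl_i'$, and the legality constraints $\readset_i = \readset_i'$, $\writeset_i = \writeset_i'$, $\valuset_i \setminus \writeset_i = \valuset_i' \setminus \writeset_i'$ for $i \in \{1,2\}$. I would then assemble the run of $\progFlatten(\pi_1,\pi_2)$ step by step. Lemma~\ref{lem:progsplit} turns $\splt \modl {\modl_1}{\modl_2}$ into a transition $\modl \intPgm{\progsplit(P)} \modl^{\mathrm a}$, where $\modl^{\mathrm a}$ superimposes onto $\modl$ the copies $\cp 1 {\modl_1}$ and $\cp 2 {\modl_2}$; then $\progStore(P)$, being deterministic here, reaches $\modl^{\mathrm b}$, recording in each $\cpr k p$, $\cpw k p$ the post-split readability, resp.\ writability, of $\cp k p$. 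From $\modl_1 \intPgm{\pi_1} \modl_1'$, Lemma~\ref{theo:copies} gives $\cp 1 {\modl_1} \intPgm{\cp 1 {\pi_1}} \cp 1 {\modl_1'}$, and since $\propsetOf{\cp 1 {\pi_1}} \subseteq \cp 1 P$, Lemma~\ref{theo:irrelevantVariables} lifts this to $\modl^{\mathrm b} \intPgm{\cp 1 {\pi_1}} \modl^{\mathrm c}$, affecting only the copies $\cp 1 P$; symmetrically $\modl^{\mathrm c} \intPgm{\cp 2 {\pi_2}} \modl^{\mathrm d}$. Next, $\modl^{\mathrm d} \models \progOkChange(P)$: its first two conjuncts hold because $\progStore$ recorded the post-split read/write status of the copies and, by $\readset_i = \readset_i'$ and $\writeset_i = \writeset_i'$, $\cp i{\pi_i}$ left that status untouched; and $\lnot \cpw k p \limp (p \leqv \cp k p)$ holds because $\valuset_i \setminus \writeset_i = \valuset_i' \setminus \writeset_i'$ and $p$ itself is never written by $\cp i{\pi_i}$. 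Finally, the hypotheses of Lemma~\ref{lem:progmerge} — RW-disjointness of $\modl_1',\modl_2'$, the two union equalities for readability and writability, and $\valuset_1' \setminus \writeset' = \valuset_2' \setminus \writeset'$ — follow from $\splt \modl {\modl_1}{\modl_2}$, $\mrg{\modl'}{\modl_1'}{\modl_2'}$ and the legality constraints; applying that lemma with $\modl^{\mathrm d}$ playing the role of its $\modl'$ and $\modl'$ playing the role of its $\modl$ (the store variables being irrelevant to $\progmerge(P)$, which overwrites them and makes them unreadable) yields $\modl^{\mathrm d} \intPgm{\progmerge(P)} \modl'$. Composing the five transitions gives $\modl \intPgm{\progFlatten(\pi_1,\pi_2)} \modl'$.

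For the converse I would run this argument in reverse. Given $\modl \intPgm{\progFlatten(\pi_1,\pi_2)} \modl'$, decompose the run at the semicolons as $\modl \intPgm{\progsplit(P)} \modl^{\mathrm a} \intPgm{\progStore(P)} \modl^{\mathrm b} \intPgm{\cp 1 {\pi_1}} \modl^{\mathrm c} \intPgm{\cp 2 {\pi_2}} \modl^{\mathrm d}$ with $\modl^{\mathrm d} \models \progOkChange(P)$ and $\modl^{\mathrm d} \intPgm{\progmerge(P)} \modl'$. The ``only if'' direction of Lemma~\ref{lem:progsplit} presents $\modl^{\mathrm a}$, hence also $\modl^{\mathrm b}$, as a superposition of $\modl$ with $\cp 1 {\modl_1}$ and $\cp 2 {\modl_2}$ for models $\modl_1,\modl_2$ satisfying $\splt \modl {\modl_1}{\modl_2}$; localising the transitions $\cp i{\pi_i}$ to the alphabet $\cp i P$ via Lemma~\ref{theo:irrelevantVariables} and inverting the copying via Lemma~\ref{theo:copies} yields models $\modl_i'$ with $\modl_i \intPgm{\pi_i} \modl_i'$; the truth of $\progOkChange(P)$ at $\modl^{\mathrm d}$ delivers the legality constraints $\readset_i = \readset_i'$, $\writeset_i = \writeset_i'$, $\valuset_i \setminus \writeset_i = \valuset_i' \setminus \writeset_i'$; and, these constraints together with RW-disjointness and the union equalities being exactly the hypotheses of Lemma~\ref{lem:progmerge}, its ``only if'' direction applied to $\modl^{\mathrm d} \intPgm{\progmerge(P)} \modl'$ gives $\mrg {\modl'}{\modl_1'}{\modl_2'}$. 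These are precisely the ingredients of the semantic clause for $\pi_1 \pll \pi_2$, so $\modl \intPgm{\pi_1 \pll \pi_2} \modl'$.

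The real work — and the reason the two directions are more than formalities — is the bookkeeping of the fresh alphabet, on three points. First, that $\cp 1 {\pi_1}$ never touches an original variable, a $\cp 2 P$-copy, or a store variable (and symmetrically for $\cp 2 {\pi_2}$), so that the two simulated sub-executions do not interfere; this is exactly the role of Lemma~\ref{theo:irrelevantVariables}. Second, that the intermediate models $\modl^{\mathrm a},\dots,\modl^{\mathrm d}$ carry precisely the readability/writability/valuation profiles under which Lemmas~\ref{lem:progsplit} and~\ref{lem:progmerge} are stated, so that those lemmas apply verbatim — in particular, that the side conditions which Lemma~\ref{lem:progmerge} lists as hypotheses are indeed secured by the interplay of $\progsplit(P)$, $\progStore(P)$ and the test $\progOkChange(P) \testpdl$. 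Third, and most delicate, the equivalence at $\modl^{\mathrm d}$ between $\progOkChange(P)$ and the three legality conditions of the $\pll$ clause: that readability and writability of $\cp k p$ coincide with the stored values $\cpr k p$, $\cpw k p$ translates directly into $\readset_i = \readset_i'$ and $\writeset_i = \writeset_i'$, while the conjunct $\lnot \cpw k p \limp (p \leqv \cp k p)$ is the faithful propositional encoding of $\valuset_i \setminus \writeset_i = \valuset_i' \setminus \writeset_i'$; this last point is exactly where the design choice of having $\progsplit$ and $\progStore$ remember the read/write status \emph{just after the split} pays off.
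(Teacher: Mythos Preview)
Your proposal is correct and follows essentially the same approach as the paper's own proof: reduce to the clean case via Lemmas~\ref{theo:irrelevantVariables} and~\ref{lem:progClean}, then in each direction walk through the five factors of $\progFlatten(\pi_1,\pi_2)$ using Lemmas~\ref{lem:progsplit}, \ref{theo:copies}, \ref{theo:irrelevantVariables}, and~\ref{lem:progmerge}, with $\progOkChange(P)$ mediating the legality constraints. Your discussion of the three bookkeeping points is in fact more explicit than the paper's sketch, which merely lists the intermediate models $\modl^+_1,\dots,\modl^+_4$ and names the lemma invoked at each step.
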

\begin{proof}[sketch]
Let $P = \propsetOf{\pi_1 \pll \pi_2}$. By Lemmas~\ref{theo:irrelevantVariables} and~\ref{lem:progClean},
we can assume that $\modl \modinter P = \modl$ and $\modl' \modinter P = \modl'$.

For the left-to-right direction, suppose there are models $\modl_1$, $\modl_2$, $\modl'_1$ and $\modl'_2$ such that
$\splt{\modl}{\modl_1} {\modl_2} $, $\mrg{\modl'}{\modl'_1} {\modl'_2} $,
$\modl_1 \intPgm{ \pi_1 } \modl'_1$, 
$\modl_2 \intPgm{ \pi_2 } \modl'_2$, 
$\readset_1 = \readset'_1 $, $\writeset_1 = \writeset'_1 $, $\valuset_1 \setminus \writeset_1 = \valuset'_1 \setminus \writeset'_1 , $
$\readset_2 = \readset'_2 $, $\writeset_2 = \writeset'_2 $, and $\valuset_2 \setminus \writeset_2 = \valuset'_2 \setminus \writeset'_2 $.
The following statements can be proved:
\begin{enumerate}
  \item\label{pllequivalence:ltr:split}
        $\modl \intPgm{\progsplit(P)} \modl^+_1$ with
        $\readset^+_1 = \readset \cup \cp 1 {\readset_1} \cup \cp 2 {\readset_2}$,
        $\writeset^+_1 = \writeset \cup \cp 1 {\writeset_1} \cup \cp 2 {\writeset_2}$, and
        $\valuset^+_1 = \valuset \cup \cp 1 {\valuset_1} \cup \cp 2 {\valuset_2}$.
        The proof relies on Lemma~\ref{lem:progsplit}.
  \item\label{pllequivalence:ltr:store}
        $\modl^+_1 \intPgm{\progStore(P)} \modl^+_2$ with
        $\readset^+_2 = \readset^+_1 \cup \storeset(P)$,
        $\writeset^+_2 = \writeset^+_1 \cup \storeset(P)$,
        $\valuset^+_2 = \valuset^+_1 \cup \valuset_{\storeset}$, and
        $\valuset_{\storeset} =
        \set{ \cpr k p \suchthat \cp k p \in \readset^+_1} \cup
        \set{ \cpw k p \suchthat \cp k p \in \writeset^+_1}$.
        Notice that
        $\valuset_{\storeset} =
        \set{ \cpr k p \suchthat p \in \readset_k} \cup
        \set{ \cpw k p \suchthat p \in \writeset_k}$.
  \item\label{pllequivalence:ltr:pi1}
        $\modl^+_2 \intPgm{\cp 1 {\pi_1}} \modl^+_3$ with
        $\readset^+_3 = \readset \cup \cp 1 {\readset'_1} \cup \cp 2 {\readset_2} \cup \storeset(P)$,
        $\writeset^+_3 = \writeset \cup \cp 1 {\writeset'_1} \cup \cp 2 {\writeset_2} \cup \storeset(P)$, and
        $\valuset^+_3 = \valuset \cup \cp 1 {\valuset'_1} \cup \cp 2 {\valuset_2} \cup \valuset_{\storeset}$.
        The proof relies on Lemmas~\ref{theo:irrelevantVariables} and~\ref{theo:copies}.
  \item\label{pllequivalence:ltr:pi2}
        $\modl^+_3 \intPgm{\cp 2 {\pi_2}} \modl^+_4$ with
        $\readset^+_4 = \readset \cup \cp 1 {\readset'_1} \cup \cp 2 {\readset'_2} \cup \storeset(P)$,
        $\writeset^+_4 = \writeset \cup \cp 1 {\writeset'_1} \cup \cp 2 {\writeset'_2} \cup \storeset(P)$, and
        $\valuset^+_4 = \valuset \cup \cp 1 {\valuset'_1} \cup \cp 2 {\valuset'_2} \cup \valuset_{\storeset}$.
  \item\label{pllequivalence:ltr:check}
        $\modl^+_4 \models \progOkChange(P)$.
  \item\label{pllequivalence:ltr:merge}
        $\modl^+_4 \intPgm{\progmerge(P)} \modl'$.
        The proof relies on Lemma~\ref{lem:progmerge}.
  \item $\modl \intPgm{\progFlatten(\pi_1, \pi_2)} \modl'$.
\end{enumerate}

For the right-to-left direction, let us suppose that there are
models $\modl^+_1$, $\modl^+_2$, $\modl^+_3$ and $\modl^+_4$ such that
$\modl \intPgm{\progsplit(P)} \modl^+_1 \intPgm{\progStore(P)} \modl^+_2 \intPgm{\pi_1} \modl^+_3
\intPgm{\pi_2 ; \progOkChange(P)?} \modl^+_4 \intPgm{\progmerge(P)} \modl'$.
The following statements can be proved:
\begin{enumerate}
  \item\label{pllequivalence:rtl:split}
        $\splt{\modl}{\modl_1}{\modl_2}$ with
        $\readset_1 = \set{ p \suchthat \cp 1 p \in \readset^+_1}$,
        $\writeset_1 = \set{ p \suchthat \cp 1 p \in \writeset^+_1}$,
        $\valuset_1 = \set{ p \suchthat \cp 1 p \in \valuset^+_1}$,
        $\readset_2 = \set{ p \suchthat \cp 2 p \in \readset^+_1}$,
        $\writeset_2 = \set{ p \suchthat \cp 2 p \in \writeset^+_1}$, and
        $\valuset_2 = \set{ p \suchthat \cp 2 p \in \valuset^+_1}$.
        The proof relies on Lemma~\ref{lem:progsplit}.
  \item\label{pllequivalence:rtl:pi1}
        $\modl_1 \intPgm{\pi_1} \modl'_1$ with
        $\readset'_1 = \set{ p \suchthat \cp 1 p \in \readset^+_3}$,
        $\writeset'_1 = \set{ p \suchthat \cp 1 p \in \writeset^+_3}$, and
        $\valuset'_1 = \set{ p \suchthat \cp 1 p \in \valuset^+_3}$.
        The proof relies on Lemmas~\ref{theo:irrelevantVariables} and~\ref{theo:copies}.
  \item\label{pllequivalence:rtl:pi2}
        $\modl_2 \intPgm{\pi_2} \modl'_2$ with
        $\readset'_2 = \set{ p \suchthat \cp 2 p \in \readset^+_4}$,
        $\writeset'_2 = \set{ p \suchthat \cp 2 p \in \writeset^+_4}$, and
        $\valuset'_2 = \set{ p \suchthat \cp 2 p \in \valuset^+_4}$.
  \item\label{pllequivalence:rtl:check}
        $\readset_1 = \readset'_1 $, $\writeset_1 = \writeset'_1 $,
        $\valuset_1 \setminus \writeset_1 = \valuset'_1 \setminus \writeset'_1$,
        $\readset_2 = \readset'_2 $, $\writeset_2 = \writeset'_2 $, and
        $\valuset_2 \setminus \writeset_2 = \valuset'_2 \setminus \writeset'_2$.
        The proof relies on the fact that $\modl^+_4 \models \progOkChange(P)$, and
        $\valuset^+_4 \cap \storeset(P) =
        \set{ \cpr k p \suchthat p \in \readset_k} \cup
        \set{ \cpw k p \suchthat p \in \writeset_k}$.
  \item\label{pllequivalence:rtl:merge}
        $\mrg{\modl'}{\modl'_1}{\modl'_2}$.
        The proof relies on Lemma~\ref{lem:progmerge}.
  \item $\modl \intPgm{\pi_1 \pll \pi_2} \modl'$.
\end{enumerate}
\end{proof}

\subsection{Reduction Axioms for Program Operators}\label{sec:redax_pgmop} 

The reduction axioms for program operators are in Table~\ref{fig:redax_pgmops}.
Those for sequential and non-deterministic composition and for exogenous tests (PDL tests) are as in PDL. 
The one for endogenous tests $ \phi \testendo$ 
checks whether $\phi$ remains true for any possible value of the non-readable variables of $\phi$. 
That for the Kleene star is familiar from \Dlpa. 
That for parallel composition $\pi_1 \pll \pi_2$ executes $\pi_1$ and $\pi_2$ in sequence:
it starts by splitting up readability and writability between the two programs,
then executes $\pi_1$, checks whether $\pi_1$ didn't change the readability and writability variables
and whether all truth value changes it brought about are legal, 
and finally executes $\pi_2$ followed by the same checks for $\pi_2$.

\begin{table}[t]
\begin{align*}
\ldia{\phi \testpdl } \psi \leqv &\ \psi \land \phi
\\
\ldia{\phi \testendo } \psi \leqv &\ \psi \land \lbox{ ~ \seqseq{p \in \propsetOf \phi} \big(
\readable{p} \testpdl \ndet \big( \lnot \readable{p} \testpdl ; (\assgntopV{p} \ndet \assgnbotV{p}) \big) 
\big) } \phi
\\
\ldia{\pi_1 ; \pi_2}  \phi \leqv &\ \ldia{\pi_1 } \ldia{\pi_2}  \phi 
\\
\ldia{\pi_1 \ndet \pi_2}  \phi \leqv &\ \ldia{\pi_1 } \phi \lor \ldia{\pi_2}  \phi 
\\
\ldia{\pi^\ast}  \phi \leqv &\ \ldia{\pi^{\leq 2^{\card{\propsetOf{\phi}}}} }  \phi   
\\
\ldia{\pi_1 \pll \pi_2}  \phi \leqv &\ \ldia{ \progFlatten(\pi_1, \pi_2) } ~ \phi 
\end{align*}
\caption{Reduction axioms for program operators
\label{fig:redax_pgmops}
}
\end{table}

Observe that the validity of the reduction axiom for endogenous tests relies on the fact that the copies  
$\cp 1 p$ and $\cp 2 p$ that are introduced by the program 
$\progsplit( \propsetOf{\pi_1 \pll \pi_2} ) $ are fresh. 
%
%
The length of the right hand side can be shortened by restricting 
$\propsetOf{\pi_1 \pll \pi_2}$ to the propositional variables that are assigned by $\propsetOf{\pi_1 \pll \pi_2}$, i.e., 
to elements $p \in \propset$ such that $\assgntopV p$ or $\assgnbotV p$ occurs in $\propsetOf{\pi_1 \pll \pi_2}$.

The exhaustive application of the equivalences of Table~\ref{fig:redax_pgmops} from the left to the right 
results in formulas whose program operators are either endogenous tests occurring in a readability statement
$\readable p = \ldia{p \testendo} \top \lor \ldia{\lnot p \testendo} \top$, or 
assignments of the form
$\assgntopR p $, $\assgnbotR p $,
$\assgntopW p $, $\assgnbotW p $, 
$\assgntopV p $, or $\assgnbotV p $. 

\subsection{Reduction Axioms for Boolean Operators}\label{sec:redax_atmpgm_bool} 

We now turn to modal operators $\ldia \pi$ where $\pi$ is an atomic assignment, i.e., $\pi$ is of the form 
$\assgntopR p $, $\assgnbotR p $,
$\assgntopW p $, $\assgnbotW p $, 
$\assgntopV p $, or $\assgnbotV p $. 
They are deterministic and can therefore be distributed over the boolean operators.
The corresponding reduction axioms are in Table~\ref{fig:redax_booleanops}.

\begin{table}[t]
\begin{align*}
\ldia{\assgntopV p} \top \leqv &\  \writable p
& \ldia{\assgnbotV p} \top \leqv &\  \writable p
\\
\ldia{\assgntopR p} \top \leqv &\ \top
& \ldia{\assgnbotR p} \top \leqv &\ \top
\\
\ldia{\assgntopW p} \top \leqv &\ \top
& \ldia{\assgnbotW p} \top \leqv &\ \top
\\
\ldia{\assgntopV p } \lnot \phi \leqv &\ 
					\writable{p} \land \lnot \ldia{\assgntopV p } \phi  
& \ldia{\assgnbotV p } \lnot \phi \leqv &\ 
					\writable{p} \land \lnot \ldia{\assgnbotV p } \phi 
\\
\ldia{\assgntopR p } \lnot \phi \leqv &\ 
					\lnot \ldia{\assgntopR p } \phi  
& \ldia{\assgnbotR p } \lnot \phi \leqv &\ 
					\lnot \ldia{\assgnbotR p } \phi 
\\
\ldia{\assgntopW p } \lnot \phi \leqv &\ 
					\lnot \ldia{\assgntopW p } \phi  
& \ldia{\assgnbotW p } \lnot \phi \leqv &\ 
					\lnot \ldia{\assgnbotW p } \phi 
\\
\ldia{\assgntopV p } (\phi \lor \psi) \leqv &\ \ldia{\assgntopV p } \phi \lor \ldia{\assgntopV p } \psi 
& \ldia{\assgnbotV p } (\phi \lor \psi) \leqv &\ \ldia{\assgnbotV p } \phi \lor \ldia{\assgnbotV p } \psi 
\\
\ldia{\assgntopR p } (\phi \lor \psi) \leqv &\ \ldia{\assgntopR p } \phi \lor \ldia{\assgntopR p } \psi 
& \ldia{\assgnbotR p } (\phi \lor \psi) \leqv &\ \ldia{\assgnbotR p } \phi \lor \ldia{\assgnbotR p } \psi 
\\
\ldia{\assgntopW p } (\phi \lor \psi) \leqv &\ \ldia{\assgntopW p } \phi \lor \ldia{\assgntopW p } \psi 
& \ldia{\assgnbotW p } (\phi \lor \psi) \leqv &\ \ldia{\assgnbotW p } \phi \lor \ldia{\assgnbotW p } \psi 
\end{align*}
\caption{Reduction axioms for boolean operators
\label{fig:redax_booleanops}
}
\end{table}

In the first equivalence $\ldia{\assgntopV p} \top \leqv \  \writable p$, 
the right hand side is nothing but an abbreviation of the left hand side. 
We nevertheless state it in order to highlight that the exhaustive application of these reduction axioms 
results in sequences of atomic assignments facing 
either $\writable p$ or $\readable q$. 
These sequences are going to be reduced in the next step.

\subsection{Reduction Axioms for Assignments}\label{sec:redax_atmpgm} 

\begin{table}[t]
\begin{align*}
\ldia{\assgntopV p } q \leqv &\ \begin{cases} 
					\writable{p} 	& \text{ if } q = p \\
					\writable{p} \land q & \text{ otherwise } 
					\end{cases} 
& \ldia{\assgnbotV p } q \leqv &\ \begin{cases} 
					\bot 	& \text{ if } q = p \\
					\writable{p} \land q & \text{ otherwise } 
					\end{cases} 
\\
\ldia{\assgntopR p } q \leqv &\ q
& \ldia{\assgnbotR p } q \leqv &\ q
\\
\ldia{\assgntopW p } q \leqv &\ q
& \ldia{\assgnbotW p } q \leqv &\ q
\\
\ldia{\assgntopV p} \readable q \leqv &\  \writable p \land \readable q
& \ldia{\assgnbotV p} \readable q \leqv &\  \writable p \land \readable q
\\
\ldia{\assgntopR p} \readable q \leqv &\ \begin{cases}
								\top & \text{ if } q = p \\
								\readable q & \text{ otherwise }
								\end{cases}
& \ldia{\assgnbotR p} \readable q \leqv &\ \begin{cases}
								\bot & \text{ if } q = p \\
								\readable q & \text{ otherwise }
								\end{cases}
\\
\ldia{\assgntopW p} \readable q \leqv &\ \begin{cases}
                \top & \text{ if } q = p \\
                \readable q & \text{ otherwise }
                \end{cases}
& \ldia{\assgnbotW p} \readable q \leqv &\ \readable q
\\
\ldia{\assgntopV p } \writable{q} \leqv &\ \writable{p} \land \writable{q}
& \ldia{\assgnbotV p } \writable{q} \leqv &\ \writable{p} \land \writable{q}
\\
\ldia{\assgntopR p } \writable{q} \leqv &\ \writable{q}
& \ldia{\assgnbotR p } \writable{q} \leqv &\ \begin{cases}
                \bot & \text{ if } q = p \\
                \writable{q} & \text{ otherwise } 
								\end{cases}
\\
\ldia{\assgntopW p } \writable{q} \leqv &\ \begin{cases}
								\top & \text{ if } q = p \\
								\writable{q} & \text{ otherwise }
								\end{cases}
& \ldia{\assgnbotW p } \writable{q} \leqv &\ \begin{cases}
								\bot & \text{ if } q = p \\
								\writable{q} & \text{ otherwise }
								\end{cases}
\end{align*}
\caption{Reduction axioms for assignments
\label{fig:redax_assignments}
}
\end{table}

When atomic programs face propositional variables or readability and writability statements 
then the modal operator can be eliminated (sometimes introducing a writability statement $\writable{p}$). 
The reduction axioms doing that are in Table~\ref{fig:redax_assignments}.

As announced, the exhaustive application of the above axioms
results in boolean combinations of propositional variables and readability and writability statements. 

\subsection{Soundness, Completeness, and Decidability} 

Let us call \DlpaPll our extension of \Dlpa with parallel composition. 
Its axiomatisation is made up of 
\begin{itemize}
\item
an axiomatisation of propositional logic;
\item
the equivalences of Sections~\ref{sec:redax_pgmop}, 
\ref{sec:redax_atmpgm_bool}, 
and~\ref{sec:redax_atmpgm};
\item
the inclusion axiom schema $\writable{p} \limp \readable p$, which is an abbreviation of the formula
$\ldia{ \assgntopV p } \top \limp \big( \ldia{ p \testendo} \top \lor \ldia{ \lnot p \testendo} \top \big) $;
\item
the rule of equivalence for the modal operator 
``from $\phi \leqv \psi$ infer $\ldia \pi \phi \leqv \ldia \pi \psi $''.
\end{itemize}

\begin{theorem}\label{theo:axiomatisationsound}
The axiomatisation of \DlpaPll is sound:
if $\phi$ is provable with the axiomatics of \DlpaPll then it is \DlpaPll valid.
\end{theorem}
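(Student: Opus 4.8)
The plan is to argue by induction on the length of a derivation in \DlpaPll: it suffices to show that every axiom schema has only valid instances and that the two inference rules preserve validity. Modus ponens (inherited from the propositional axiomatics) obviously preserves validity, and the rule of equivalence ``from $\phi \leqv \psi$ infer $\ldia\pi\phi \leqv \ldia\pi\psi$'' does too, since by the semantic clause for $\ldia\pi$ the set of models satisfying $\ldia\pi\phi$ depends on $\phi$ only through the set of models satisfying $\phi$. So the whole content of the theorem is the validity of the axioms.

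The propositional axioms are valid by the clauses for $\lnot$ and $\lor$. Among the reduction axioms of Table~\ref{fig:redax_pgmops}, those for $\pi_1;\pi_2$, for $\pi_1 \ndet \pi_2$, and for the exogenous test $\phi\testpdl$ unfold verbatim from the corresponding clauses of $\intPgm\cdot$, exactly as in \Pdl and \Dlpa; the star axiom $\ldia{\pi^\ast}\phi \leqv \ldia{\pi^{\leq 2^{\card{\propsetOf\phi}}}}\phi$ is the \Dlpa star-elimination axiom, justified by the standard bounded-unfolding argument of \cite{BalbianiHerzigTroquard-Lics13} together with Lemma~\ref{theo:irrelevantVariables}, which caps the number of $\propsetOf\phi$-relevant intermediate states along an iteration of $\pi$. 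The axioms of Tables~\ref{fig:redax_booleanops} and~\ref{fig:redax_assignments} are all checked by direct computation from the six clauses for atomic assignments; distributing $\ldia{\assgntopV p}$, $\ldia{\assgnbotV p}$, $\ldia{\assgntopR p}$, $\ldia{\assgnbotR p}$, $\ldia{\assgntopW p}$, $\ldia{\assgnbotW p}$ over $\lnot$ and $\lor$ is licensed by the fact that each of these relations is a partial function, hence validates $\ldia\pi\phi \limp \lbox\pi\phi$. The inclusion axiom $\writable p \limp \readable p$, unfolding to $\ldia{\assgntopV p}\top \limp (\ldia{p\testendo}\top \lor \ldia{\lnot p\testendo}\top)$, is valid because its antecedent holds at $\modl$ iff $p \in \writeset$, which by the standing constraint $\writeset \subseteq \readset$ forces $p \in \readset$, and $p \in \readset$ is precisely when $\readable p$ holds.

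Two cases carry real weight. For the endogenous-test axiom I would unfold the box-program $\seqseq{p\in\propsetOf\phi}\big(\readable{p}\testpdl \ndet (\lnot\readable{p}\testpdl ; (\assgntopV p \ndet \assgnbotV p))\big)$ and check that, from a model $\modl$, it reaches exactly the models that keep $\readset$, $\writeset$, and the truth values of all readable variables of $\phi$ fixed while letting the non-readable variables of $\propsetOf\phi$ vary; by Lemma~\ref{theo:irrelevantVariables} these are, as far as the truth of $\phi$ is concerned, exactly the $\sim$-equivalents of $\modl$, so $\modl \models \lbox{\cdots}\phi$ iff $\modl'' \models \phi$ for every $\modl'' \sim \modl$, which conjoined with $\psi$ is the semantics of $\ldia{\phi\testendo}\psi$. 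The delicate point here, flagged just after Table~\ref{fig:useful_programs}, is that this relies on the copies $\cp 1 p$, $\cp 2 p$ introduced by $\progsplit$ being fresh, so that one is really quantifying over all read-indistinguishable models. For the parallel-composition axiom, $\ldia{\pi_1 \pll \pi_2}\phi \leqv \ldia{\progFlatten(\pi_1,\pi_2)}\phi$ is immediate from Lemma~\ref{lem:pllequivalence}, which states that $\intPgm{\pi_1\pll\pi_2}$ and $\intPgm{\progFlatten(\pi_1,\pi_2)}$ are one and the same relation.

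I expect the parallel-composition axiom to be where all the difficulty would in principle lie, but it has already been absorbed into Lemma~\ref{lem:pllequivalence} and its supporting Lemmas~\ref{lem:progsplit}, \ref{lem:progmerge} and~\ref{lem:progClean}, which respectively simulate the split, the legality checks, and the merge by the sequential programs of Table~\ref{fig:useful_programs}. Given those, the soundness proof proper is a finite, essentially mechanical tour through the axiom list, with the endogenous-test axiom being the only remaining case that needs genuine care, precisely because its executability is governed by the read-indistinguishability relation $\sim$ rather than by the current model alone.
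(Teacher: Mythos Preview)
Your proof follows essentially the same route as the paper's: check that the inference rules preserve validity and then verify each axiom schema, singling out the endogenous-test and parallel-composition reductions as the substantive cases, with the latter discharged directly by Lemma~\ref{lem:pllequivalence} and the Kleene-star axiom by appeal to~\cite{BalbianiHerzigTroquard-Lics13}. Your write-up is more detailed than the paper's sketch---you spell out the inclusion axiom, the determinacy of atomic assignments, and the role of Lemma~\ref{theo:irrelevantVariables}---but the organisation, the key lemmas invoked, and the identification of which cases require care are identical.
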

\begin{proof}[sketch]
We have to show that the inference rules preserve validity and the axioms are valid. 
For the reduction axiom for endogenous test, it suffices to observe that
$\modl \intPgm{\assgntopV p \ndet \assgnbotV p} \modl'$ if and only if $\modl \sim \modl'$.
The proof of validity of the reduction axiom for Kleene star can easily be adapted from the one in~\cite{BalbianiHerzigTroquard-Lics13}.
The case of the reduction axiom for parallel composition is handled by Lemma~\ref{lem:pllequivalence}.
All other cases are straightforward.
\end{proof}

\begin{theorem}
The axiomatisation of \DlpaPll is complete:
if $\phi$ is \DlpaPll valid then it is provable in the axiomatics of \DlpaPll. 
\end{theorem}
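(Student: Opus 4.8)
The completeness proof follows the standard template for reduction-axiom axiomatisations, so the plan is to argue that every formula is provably equivalent to a formula in the target normal form, and then to invoke soundness together with a separate (purely propositional) completeness argument for that normal form.

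First I would make precise the target normal form: a \emph{reduced formula} is a boolean combination of propositional variables $p$, writability statements $\writable p$, and readability statements $\readable p$. The key syntactic claim is that for every formula $\phi$ of \DlpaPll there is a reduced formula $\phi^\flat$ such that $\phi \leqv \phi^\flat$ is provable in the axiomatics. This is proved by defining a terminating rewriting strategy driven by the equivalences of Tables~\ref{fig:redax_pgmops}, \ref{fig:redax_booleanops} and~\ref{fig:redax_assignments}, applied under the rule of equivalence for modal operators together with propositional reasoning. Concretely: innermost-first, eliminate every program operator inside a diamond using Section~\ref{sec:redax_pgmop} (sequential composition, nondeterministic choice, exogenous and endogenous tests, Kleene star via the $\pi^{\leq 2^{|\propsetOf\phi|}}$ axiom, and parallel composition via $\progFlatten$); note that $\progFlatten(\pi_1,\pi_2)$ reintroduces only sequential composition, choice, exogenous tests, endogenous tests and atomic assignments on variables and their copies, so no parallel operator survives. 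After this stage the only diamonds left are over atomic assignments $\assgntopV p,\assgnbotV p,\assgntopR p,\assgnbotR p,\assgntopW p,\assgnbotW p$, possibly in front of the endogenous tests hidden inside $\readable q$ abbreviations; push these past the boolean connectives using Table~\ref{fig:redax_booleanops}, and finally discharge them against atoms, $\readable q$ and $\writable q$ using Table~\ref{fig:redax_assignments}. The main point to check is \emph{termination}: one exhibits a well-founded measure on formulas (e.g. lexicographic on the multiset of sizes of programs occurring in diamonds, with a subsidiary component counting nested diamonds over atomic assignments) that strictly decreases under each rewrite step; the only delicate case is the Kleene-star axiom, whose termination is exactly as in~\cite{BalbianiHerzigTroquard-Lics13}, and the parallel-composition axiom, where one observes that $\progFlatten(\pi_1,\pi_2)$ contains the parallel operator only inside $\pi_1$ and $\pi_2$, which are strictly smaller, so an outermost-parallel-first strategy removes all parallel operators.

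Next I would close the argument model-theoretically. Suppose $\phi$ is \DlpaPll valid. By the syntactic claim, $\phi \leqv \phi^\flat$ is provable, hence by Theorem~\ref{theo:axiomatisationsound} it is valid, hence $\phi^\flat$ is valid as well. Now $\phi^\flat$ is a boolean combination of $p$'s, $\readable p$'s and $\writable p$'s; treating these atomic statements as fresh propositional letters subject only to the constraint $\writable p \limp \readable p$, validity of $\phi^\flat$ over all models $\modl=\tuple{\readset,\writeset,\valuset}$ amounts to the claim that $\phi^\flat$ is a propositional consequence of the instances of the inclusion schema $\writable p \limp \readable p$ (for $p$ ranging over the finitely many variables occurring in $\phi^\flat$). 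Here the one genuinely semantic observation is that the map sending a model $\modl$ to the truth assignment $p\mapsto(p\in\valuset)$, $\readable p\mapsto(p\in\readset)$, $\writable p\mapsto(p\in\writeset)$ is onto the set of assignments satisfying all the inclusion instances, which is immediate since any triple of subsets of $\propset$ with $\writeset\subseteq\readset$ is a legitimate model. Therefore, by completeness of propositional logic, $\phi^\flat$ is provable from the inclusion axioms using the propositional part of the axiomatics, and so $\phi$ is provable.

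The step I expect to be the main obstacle is the termination/normalisation argument for the rewriting strategy, specifically finding a single well-founded measure that simultaneously handles the blow-up introduced by $\progFlatten$ (which replaces $\pi_1\pll\pi_2$ by a long sequential program mentioning copies $\cp1p,\cp2p$ and store variables $\cpr kp,\cpw kp$, and which itself contains further endogenous tests via the $\readable{}$ abbreviations) and the exponential unfolding of the Kleene star. The clean way around this is to stratify: first prove by induction on the number of parallel operators that every formula is provably equivalent to one with no parallel operator (each application of the $\pll$-axiom strictly decreases that count, since $\progFlatten(\pi_1,\pi_2)$ puts no new $\pll$ outside $\pi_1,\pi_2$); then, on parallel-free formulas, reuse verbatim the \Dlpa normalisation of~\cite{BalbianiHerzigTroquard-Lics13,BalbianiHST14} extended with the trivial new cases for $\assgntopR p$, $\assgnbotR p$, $\assgntopW p$, $\assgnbotW p$ and for endogenous tests (whose reduction axiom strictly decreases formula size once the $\readable{}$ abbreviations are accounted for). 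Everything else is routine bookkeeping with the rule of equivalence and propositional reasoning.
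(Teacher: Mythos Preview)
Your proposal is correct and follows the same route as the paper: use the reduction axioms (together with replacement of equivalents, derivable from the rule of equivalence) to rewrite any formula to a boolean combination of $p$, $\readable p$, $\writable p$, then observe that \DlpaPll validity of such a reduced formula coincides with propositional consequence from the instances of $\writable p \limp \readable p$, and conclude by propositional completeness. The paper's own proof is in fact terser than yours and does not discuss termination of the rewriting at all; one small caveat on your stratified termination argument is that the $\pll$-\emph{count} is not stable under the Kleene-star axiom (unfolding $(\pi_1 \pll \pi_2)^\ast$ multiplies the $\pll$-occurrences), so to make the induction go through you should either argue via a derived program-congruence rule (substituting $\progFlatten(\pi_1,\pi_2)$ for $\pi_1\pll\pi_2$ in situ, without first exposing it under a diamond) or use $\pll$-nesting depth rather than $\pll$-count as the well-founded measure.
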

\begin{proof}
The reduction axioms of Sections~\ref{sec:redax_pgmop},~\ref{sec:redax_atmpgm_bool}, and~\ref{sec:redax_atmpgm} 
allow us to transform any formula into an equivalent boolean combination of 
propositional variables and readability and writability statements. 
(Their application requires the rule of replacement of equivalents, which is derivable 
because we have rules of equivalence for all the connectives of the language, 
in particular the above $RE(\ldia \pi)$.)
Let $\phi$ be the resulting formula. 
Then $\phi$ has a \DlpaPll model if and only if 
$$\phi \land 
\bigwedge_{p \in \propset} \big( \writable{p} \limp \readable p \big) $$
has a model in propositional logic, where in propositional logic,
$\readable p$ and $\writable{p}$ are considered to be arbitrary propositional variables; so
there is a priori no connection between them nor with the propositional variable $p$. 
\end{proof}

Based on the reduction of \DlpaPll formulas to boolean formulas 
(and the transformation of $\readable p$ and $\writable{p}$ from abbreviations into propositional variables),
we may check the satisfiability of \DlpaPll formulas by means of propositional logic SAT solvers. 
This is however suboptimal because the reduction may result in a formula that is super-exponentially longer than the original formula. 
In the next section we explore another route.

\section{Complexity via Translation into \Dlpa}\label{sec:complexity}

We establish PSPACE complexity of \DlpaPll satisfiability and model checking by translating formulas and programs to 
Dynamic Logic of Propositional Assignments \Dlpa. 
The language of the latter is the fragment of that of \DlpaPll:
it has neither endogenous tests, nor readability and writability assignments, nor parallel composition. 
Hence the language of \Dlpa is built by the following grammar: 
\begin{align*}
\phi & ::= p \mid \top \mid \lnot \phi \mid (\phi \lor \phi) \mid \ldia \pi \phi
\\
\pi & ::= \assgntopV p \mid \assgnbotV p \mid
			\phi \testpdl \mid 
			(\pi ; \pi) \mid (\pi \ndet \pi) \mid 
			\pi^\ast 
\end{align*} 
None of the operators of the language refers to the $\readset$-component or the $\writeset$-component of models. 
The interpretation of \Dlpa formulas and programs therefore only requires a valuation $\valuset$. 

Our translation from \DlpaPll to \Dlpa eliminates endogenous tests and parallel composition. 
This is done in a way that is similar to their reduction axioms of Table~\ref{fig:redax_pgmops}.
It moreover transforms readability and writability statements into special propositional variables $\readable p$ and $\writable p$, similar to the reduction axioms of Table~\ref{fig:redax_assignments}.

To make this formal, let the set of \emph{atomic formulas} be
$$ \atmset = \propset \cup \{ \writable{p} \suchthat p \in \propset \} \cup \{ \readable p \suchthat p \in \propset \} . $$
Given a set of propositional variables $P \subseteq \propset$, 
$\readOf P = \{ \readable p \suchthat p \in P \}$ 
is the associated set of read-variables and 
$\writeOf P = \{ \writable{p} \suchthat p \in P \}$ 
is the associated set of write-variables. 
Hence $\atmset = \propset \cup \readOf \propset \cup \writeOf \propset$. 
As before, the set of propositional variables occurring in a formula $\phi$ is noted $\propsetOf \phi $ and 
the set of those occurring in a program $\pi$ is noted $\propsetOf \pi $. 
This now includes the $p$'s in $\readable p$ and $\writable{p}$. 
For example, $\propsetOf{ p \land \ldia{\assgntopV{w_q} } \lnot \readable p } = \{p,q\}$. 

We translate the \DlpaPll programs
$\assgntopR{p}$, 
$\assgnbotR{p}$, 
$\assgntopW{p}$, and 
$\assgnbotW{p}$ 
into the \Dlpa programs 
$\assgntop{ \readable{p}}$,
$\assgnbot{ \readable{p}}$,
$\assgntop{ \writable{p}}$ and
$\assgnbot{ \writable{p}}$.
Moreover, we have to `spell out' that 
$\assgnbot{ \writable p }$ has side effect $\assgnbot{ \readable p }$ and that  
$\assgntop{ \readable p }$ has side effect $\assgntop{ \writable p }$. 
Hence the programs and formulas of Table~\ref{fig:useful_programs} become the \Dlpa programs and formulas listed in Table~\ref{fig:useful_dlpa}.
\begin{table}[t]
\begin{align*}
\progsplit(P) =&\ \seqseq{p \in P} \Big( 
\Big(
  \big( p \testpdl ; \assgntopV{ \cp{1}{p} } ; \assgntopV{ \cp{2}{p} } \big) \ndet 
  \big( \lnot p \testpdl ; \assgnbotV{ \cp{1}{p} } ; \assgnbotV{ \cp{2}{p} } \big) 
\Big) ;
\\& ~~~~~~~~~~~
\assgnbot{ \writable{\cp 1 {p}}} ; \assgnbot{ \readable{\cp 1 {p}}} ; \assgnbot{ \writable{\cp 2 {p}}} ; \assgnbot{ \readable{\cp 2 {p}}} ;
\\& ~~~~~~~~~~~
\Big(
  \lnot \readable p  \testpdl \ndet 
  \\& ~~~~~~~~~~~~~
  \big(\writable{p} \testpdl ; ( (\assgntop{ \readable{\cp 1 {p}}} ; \assgntop{ \writable{\cp 1 {p}}}) \ndet (\assgntop{ \readable{\cp 2 {p}}} ; \assgntop{ \writable{\cp 2 {p}}}) ) \big) 			 \ndet
  \\& ~~~~~~~~~~~~~
  \big(\lnot \writable{p} {\land} \readable p  \testpdl ; \big( \assgntop{ \readable{\cp 1 {p}}} \ndet \assgntop{ \readable{\cp 2 {p}}} \ndet 
(\assgntop{ \readable{\cp 1 {p}}}  ; \assgntop{ \readable{\cp 2 {p}}}) \big) \big) 
\Big)
\Big)
\displaybreak[0]
\\ 
\progStore(P) =&\ \seqseq{p \in P} \seqseq{k \in \{1, 2\}} \Big(
  \big( \assgnpropV{\readable{\cp k p}}{\cpr k p} \big) ;
  \\& ~~~~~~~~~~~~~
  \big( \assgnpropV{\writable{\cp k p}}{\cpw k p} \big)
\Big)
\displaybreak[0]
\\ 
\progOkChange(P) =&\ \bigwedge_{p \in P} \bigwedge_{k \in \{1, 2\}} \Big(
( \cpr k p \leqv \readable{ \cp k {p} } ) 	\land 
( \cpw k p \leqv \writable{ \cp k {p} } ) 	\land 	
( \lnot \cpw k {p} \limp (p \leqv \cp k {p}) )
\Big)
\displaybreak[0]
\\ 
\progmerge(P) =&\ \seqseq{p \in P} \Big(\big( 
\lnot \writable{p} \testpdl \ndet 
\\& ~~~~~~~~~~~ 
\big(	(\writable{ \cp{1}{p} } \land \cp{1}{p}) \lor 
		(\writable{ \cp{2}{p} } \land \cp{2}{p}) \testpdl ; \assgntopV p \big) \ndet 
\\& ~~~~~~~~~~~ 
\big(	(\writable{ \cp{1}{p} } \land \lnot \cp{1}{p}) \lor 
		(\writable{ \cp{2}{p} } \land \lnot \cp{2}{p}) \testpdl ; \assgnbotV p \big)
\big) ;
\\& ~~~~~~~~
\seqseq{k \in \{1, 2\}} \big(
  \assgnbotV{\cpr k p} ; \assgnbotV{\cpw k p} ; \assgnbotV{\cp k p} ;
  \assgnbotV{\writable{\cp k p}} ; \assgnbotV{\readable{\cp k p}}
\big)
\Big)
\displaybreak[0]
\\ 
\progFlatten(\pi_1, \pi_2) =&\ %
  \progsplit\left(\propsetOf{\pi_1 \pll \pi_2}\right) ;
  \progStore\left(\propsetOf{\pi_1 \pll \pi_2}\right) ;
  \cp 1 {\pi_1} ; \cp 2 {\pi_2} ;
  \progOkChange\left(\propsetOf{\pi_1 \pll \pi_2}\right) \testpdl ;
  \progmerge\left(\propsetOf{\pi_1 \pll \pi_2}\right)
\end{align*}
\caption{Adaptation of the programs and formulas of Table~\ref{fig:useful_programs} to the translation into \Dlpa.
\label{fig:useful_dlpa}
}
\end{table}
Notice that
readability and writability statements are no longer \DlpaPll abbreviations, but are now \Dlpa propositional variables. 

Given a \DlpaPll program or formula, its translation into \Dlpa basically follows the reduction axiom for endogenous tests $\testendo$ and parallel composition $ \pll $ of Table~\ref{fig:redax_pgmops}. 
We replace:
\begin{enumerate}
\item
all occurrences of $\phi \testendo $ with 
$\left[\seqseq{p \in \propsetOf \phi} \Big(
\readable{p} \testpdl \ndet \big( \lnot \readable{p} \testpdl ; (\assgntopV{p} \ndet \assgnbotV{p}) \big) 
\Big)\right] \phi \testpdl $,
\item
all occurrences of $ \pi_1 \pll \pi_2 $ with $\progFlatten(\pi_1, \pi_2)$,
\item
all occurrences of $\assgntopV p$ with $\writable p ? ; \assgntopV p$,
\item
all occurrences of $\assgnbotV p$ with $\writable p ? ; \assgnbotV p$,
\item
all occurrences of $\assgntopR{p}$ with $\assgntop{ \readable{p}}$,
\item
all occurrences of $\assgnbotR{p}$ with $\assgnbot{ \writable{p}} ; \assgnbot{ \readable{p}}$,
\item
all occurrences of $\assgntopW{p}$ with $\assgntop{ \readable{p}} ; \assgntop{ \writable{p}}$,
\item
all occurrences of $\assgnbotW{p}$ with $\assgnbot{ \writable{p}}$.
\end{enumerate}
Let $t(\pi)$ be the translation of the \DlpaPll program $\pi$ and
$t(\phi)$ the translation of the \DlpaPll formula $\phi$.
Remember that when $t(\pi)$ and $t(\phi)$ are interpreted in \Dlpa, the variables $\readable p$ and $\writable{p}$ 
are considered to be arbitrary propositional variables. 

\begin{lemma}\label{lem:dlpatradcorrect}
For all \DlpaPll programs $\pi$ and formula $\phi$, and all \DlpaPll models $\modl$ and $\modl'$ such that $\modl \modinter P = \modl$,
\begin{align*}
  \modl \intPgm{\pi} \modl' &\text{ if and only if } \valuset^+ \intPgm{t(\pi)}^{\Dlpa} \valuset'^+ \text{, and} \\
  \modl \models \phi        &\text{ if and only if } \valuset^+ \models_{\Dlpa} t(\phi)
\end{align*}
where $\valuset^+  = \valuset  \cup \readOf{\readset } \cup \writeOf{\writeset }$
and   $\valuset'^+ = \valuset' \cup \readOf{\readset'} \cup \writeOf{\writeset'}$.
\end{lemma}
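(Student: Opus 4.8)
The plan is to prove both equivalences simultaneously by induction on the structure of $\pi$ and $\phi$, exactly as in Lemmas~\ref{theo:irrelevantVariables} and~\ref{theo:copies}, carrying the formula statement as the induction hypothesis for the program statement and vice versa. The key observation driving the whole argument is that the map $\modl = \tuple{\readset,\writeset,\valuset} \mapsto \valuset^+ = \valuset \cup \readOf{\readset} \cup \writeOf{\writeset}$ is a bijection between \DlpaPll models and those \Dlpa valuations over $\atmset$ that respect the inclusion constraint (i.e.\ $\readable p \in \valuset^+$ whenever $\writable p \in \valuset^+$); under this correspondence each \DlpaPll notion has a \Dlpa counterpart, and the translation $t$ is designed precisely to track it.

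First I would handle the atomic programs. For $\assgntopV p$ and $\assgnbotV p$ the translation prefixes a test $\writable p ?$, which in \Dlpa succeeds iff $\writable p \in \valuset^+$, i.e.\ iff $p \in \writeset$; this matches the executability condition, and the effect on $\valuset$ is the same on both sides while $\readOf{\readset}$ and $\writeOf{\writeset}$ are untouched. For $\assgntopR p$, $\assgnbotR p$, $\assgntopW p$, $\assgnbotW p$ the translations assign $\readable p$ and/or $\writable p$ in the order dictated by the side-effects spelled out in Section~\ref{sec:complexity}, so that the resulting \Dlpa valuation is exactly the $\valuset'^+$ image of the \DlpaPll result model; one checks in particular that $\assgnbot{\writable p};\assgnbot{\readable p}$ produces $\readOf{\readset\setminus\{p\}}\cup\writeOf{\writeset\setminus\{p\}}$, matching $\assgnbotR p$, and similarly for the others. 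The exogenous test $\phi?$ uses the formula induction hypothesis directly. For the endogenous test $\phi\testendo$, the translation is the \Dlpa program $\left[\seqseq{p\in\propsetOf\phi}\big(\readable p?\ndet(\lnot\readable p?;(\assgntopV p\ndet\assgnbotV p))\big)\right]t(\phi)?$; here I would argue that running this program explores exactly the set of \Dlpa valuations that are images of models $\modl''\sim\modl$ (the readable part of $\valuset$ is kept, the non-readable propositional variables range over both truth values, and $\readOf{\readset},\writeOf{\writeset}$ are fixed), so that the test succeeds iff $\modl''\models\phi$ for all such $\modl''$, which is the semantics of $\phi\testendo$; this step reuses the formula induction hypothesis and the characterisation of $\sim$.

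The compound program cases $\pi_1;\pi_2$, $\pi_1\ndet\pi_2$, $\pi^\ast$ are immediate from the induction hypothesis since $t$ is a homomorphism for these operators and the \Dlpa semantics of $;$, $\ndet$, $\ast$ mirrors that of \DlpaPll. The boolean and modal formula cases $\lnot$, $\lor$, $\ldia\pi$ are likewise routine using the two halves of the induction hypothesis. The substantial case is parallel composition $\pi_1\pll\pi_2$, whose translation is $t(\progFlatten(\pi_1,\pi_2))$ over the \Dlpa programs of Table~\ref{fig:useful_dlpa}. Here the plan is to invoke Lemma~\ref{lem:pllequivalence}, which already tells us that in \DlpaPll we have $\modl\intPgm{\pi_1\pll\pi_2}\modl'$ iff $\modl\intPgm{\progFlatten(\pi_1,\pi_2)}\modl'$, together with Lemma~\ref{lem:progClean}; what remains is to transfer this equivalence across $t$. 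The translated $\progFlatten$ is built from $\progsplit$, $\progStore$, the checked execution of $\cp 1{\pi_1};\cp 2{\pi_2}$, and $\progmerge$, all now phrased with $\assgntop{\readable\cdot},\assgntop{\writable\cdot}$ etc.\ in place of $\assgntopR\cdot,\assgntopW\cdot$; I would verify step by step that each \Dlpa constituent faithfully simulates its \DlpaPll constituent under the $\valuset^+$ correspondence — in particular that the \Dlpa $\progsplit$ produces the read/write copies $\cp k p$ with the correct $\readable{\cp k p},\writable{\cp k p}$ bits, matching Lemma~\ref{lem:progsplit}, and symmetrically for $\progmerge$ via Lemma~\ref{lem:progmerge} — and then apply the already-established induction hypothesis to the occurrences of $\cp 1{\pi_1}$ and $\cp 2{\pi_2}$ inside, using Lemma~\ref{theo:copies} to deal with the copied variables.

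\textbf{Main obstacle.} The delicate point is the parallel-composition case: one must be careful that the \Dlpa propositional variables $\readable p$, $\writable p$, $\readable{\cp k p}$, $\writable{\cp k p}$, $\cpr k p$, $\cpw k p$ — which are all genuinely free in the \Dlpa model and carry no built-in inclusion constraint — end up taking exactly the values forced by their \DlpaPll reading, so that the \Dlpa execution never strays outside the image of the $\valuset^+$ map on the components that matter. This requires checking that $\progsplit$ and $\progStore$ initialise every auxiliary variable and that $\progmerge$ resets every copy and store variable to false, so that Lemma~\ref{lem:progClean}'s analysis carries over verbatim; the inclusion-constraint bookkeeping, which in \DlpaPll is automatic from $\writeset\subseteq\readset$, must here be seen to be respected dynamically by the explicit side-effect orderings in the translations of $\assgntopR\cdot$, $\assgnbotR\cdot$, $\assgntopW\cdot$, $\assgnbotW\cdot$ and inside $\progsplit$/$\progmerge$. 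Once that is in place, the equivalence for $\pi_1\pll\pi_2$ follows by chaining the per-constituent simulations with Lemmas~\ref{lem:pllequivalence}, \ref{lem:progsplit}, \ref{lem:progmerge}, \ref{lem:progClean} and \ref{theo:copies}.
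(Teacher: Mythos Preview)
Your proposal is correct and follows essentially the same approach as the paper: a simultaneous induction on the structure of programs and formulas, with the endogenous-test case handled via the characterisation of $\sim$ and the parallel-composition case handled by transferring Lemma~\ref{lem:pllequivalence} across the $\valuset^+$ correspondence. Your plan is considerably more fleshed out than the paper's two-line sketch (which merely points back to the proof of Theorem~\ref{theo:axiomatisationsound}), and in particular you correctly identify the one genuinely non-trivial point the paper glosses over---namely that in \Dlpa the auxiliary variables $\readable{\cp k p}$, $\writable{\cp k p}$, $\cpr k p$, $\cpw k p$ are ordinary propositional variables with no built-in constraint, so one must check that $\progsplit$/$\progStore$ initialise them and $\progmerge$ resets them so that Lemma~\ref{lem:progClean}'s analysis carries over.
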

\begin{proof}[sketch]
The proof is by simultaneous induction on the size of $\pi$ or $\phi$.
The cases for endogenous test and parallel composition are similar to those in the proof of Theorem~\ref{theo:axiomatisationsound}.
The other cases are straightforward.
\end{proof}

The following theorem is a direct corollary of the previous lemma.

\begin{theorem}
A \DlpaPll formula $\phi$ is \DlpaPll-satisfiable if and only if the \Dlpa formula
$t(\phi) \land \bigwedge_{p \in \propsetOf \phi } (\writable{p} \limp \readable p)$ 
is \Dlpa satisfiable.
\end{theorem}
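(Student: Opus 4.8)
The plan is to derive the theorem as an immediate corollary of Lemma~\ref{lem:dlpatradcorrect}, essentially by instantiating that lemma at the right model and unwinding the definitions of satisfiability on both sides. First I would fix a \DlpaPll formula $\phi$ and let $P = \propsetOf \phi$, a finite set. Recall that \DlpaPll-satisfiability of $\phi$ means there is a model $\modl = \tuple{\readset,\writeset,\valuset}$ with $\modl \models \phi$; by Lemma~\ref{theo:irrelevantVariables} (with the set $P$) we may assume without loss of generality that $\modl \modinter P = \modl$, i.e.\ $\readset, \writeset, \valuset \subseteq P$. This is exactly the hypothesis needed to invoke Lemma~\ref{lem:dlpatradcorrect}, which then gives $\modl \models \phi$ iff $\valuset^+ \models_{\Dlpa} t(\phi)$, where $\valuset^+ = \valuset \cup \readOf \readset \cup \writeOf \writeset$.

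Next I would handle the two directions. For the left-to-right direction, given such a $\modl$ with $\modl \models \phi$, the \Dlpa valuation $\valuset^+$ satisfies $t(\phi)$. It remains to check that $\valuset^+$ also satisfies the conjunction $\bigwedge_{p \in P}(\writable p \limp \readable p)$: this holds because $\writable p \in \valuset^+$ iff $p \in \writeset$, $\readable p \in \valuset^+$ iff $p \in \readset$, and the model constraint $\writeset \subseteq \readset$ is precisely what makes the implication true for every $p$. Hence $t(\phi) \land \bigwedge_{p \in P}(\writable p \limp \readable p)$ is \Dlpa-satisfiable. For the right-to-left direction, suppose there is a \Dlpa valuation $\valuset^\dagger$ over the atomic formulas $\atmset$ satisfying both conjuncts. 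Define $\valuset = \valuset^\dagger \cap \propset$, $\readset = \set{p : \readable p \in \valuset^\dagger}$, $\writeset = \set{p : \writable p \in \valuset^\dagger}$; the second conjunct guarantees $\writeset \subseteq \readset$, so $\modl = \tuple{\readset,\writeset,\valuset}$ is a genuine \DlpaPll model. One should also restrict to $P$ (replacing $\readset,\writeset,\valuset$ by their intersections with $P$, which is harmless by Lemma~\ref{theo:irrelevantVariables}) so that $\modl \modinter P = \modl$ and $\valuset^+ = \valuset^\dagger$ up to variables irrelevant to $t(\phi)$; then Lemma~\ref{lem:dlpatradcorrect} yields $\modl \models \phi$, so $\phi$ is \DlpaPll-satisfiable.

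The only real bookkeeping obstacle is making sure the ``irrelevant variables'' manipulation is airtight on the \Dlpa side: the translation $t(\phi)$ introduces fresh copy variables $\cp k p$ and store variables from $\storeset(P)$, so one must be careful that the valuation fed to Lemma~\ref{lem:dlpatradcorrect} agrees with $\valuset^\dagger$ on the variables that matter and that the extra variables do not interfere — but this is exactly the content of Lemma~\ref{lem:progClean} / the irrelevant-variables reasoning already available, and the soundness lemmas were stated with the hypothesis $\modl \modinter P = \modl$ precisely to make this step clean. I do not expect any genuinely hard step; the theorem is, as the text says, a direct corollary, and the proof is a short composition of Lemma~\ref{lem:dlpatradcorrect} with the observation that the side condition $\writable p \limp \readable p$ is the propositional shadow of the model constraint $\writeset \subseteq \readset$.
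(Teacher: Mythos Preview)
Your proposal is correct and follows exactly the route the paper intends: the paper gives no detailed proof at all, stating only that the theorem ``is a direct corollary of the previous lemma'' (Lemma~\ref{lem:dlpatradcorrect}), and what you have written is precisely the unwinding of that corollary---restricting to $P=\propsetOf\phi$ via Lemma~\ref{theo:irrelevantVariables}, reading off the \Dlpa valuation $\valuset^+$ in one direction, reconstructing $\tuple{\readset,\writeset,\valuset}$ from the \Dlpa valuation in the other, and using the conjunct $\bigwedge_{p}(\writable p\limp\readable p)$ as the propositional encoding of the model constraint $\writeset\subseteq\readset$. Your bookkeeping remark about fresh copy and store variables is a fair caveat but not a real obstacle, since those variables are handled uniformly by the irrelevant-variables lemma on the \Dlpa side just as on the \DlpaPll side.
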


We can now state our complexity results.

\begin{lemma}\label{lem:polynomial}
The translation $t$ is polynomial.
\end{lemma}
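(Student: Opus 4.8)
The plan is to prove, by a simultaneous structural induction on the \DlpaPll program $\pi$ and formula $\phi$, that $|t(\pi)|$ and $|t(\phi)|$ are bounded by a fixed polynomial in $|\pi|$, resp.\ $|\phi|$; since $t$ is defined by a direct syntactic recursion, the polynomial bound on the size of the output also shows that $t$ is computable in polynomial time. First I would dispose of the routine cases. For the boolean connectives, for $\ldia \pi {\cdot}$, and for the program constructors $;$, $\ndet$, ${}^\ast$ and exogenous tests, $t$ is a homomorphism that recurses into the immediate subexpressions and adds only a constant number of symbols, so no blow-up occurs. For each of the six atomic assignments, $t$ produces a program of constant size over the atoms $p$, $\readable p$, $\writable p$ (for instance $\assgntopV p$ becomes $\writable p \testpdl ; \assgntopV p$ and $\assgnbotW p$ becomes $\assgnbot{\writable p}$). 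Hence only the clauses for the endogenous test and for parallel composition can contribute more than a constant amount.

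For $\phi \testendo$, the translation is, up to applying $t$ once to the subformula $\phi$, exactly the program occurring on the right-hand side of the reduction axiom for $\testendo$ in Table~\ref{fig:redax_pgmops}: a sequence $\seqseq{p \in \propsetOf \phi}(\cdots)$ of constant-size blocks, one per variable of $\phi$, hence of size $O(|\propsetOf \phi|) = O(|\phi|)$, with $t(\phi)$ appearing in it exactly once. For $\pi_1 \pll \pi_2$, writing $P = \propsetOf{\pi_1 \pll \pi_2}$, the translation is $\progFlatten(\pi_1, \pi_2) = \progsplit(P) ; \progStore(P) ; \cp 1 {\pi_1} ; \cp 2 {\pi_2} ; \progOkChange(P) \testpdl ; \progmerge(P)$, with $t$ applied recursively inside the two copies. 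Inspecting Table~\ref{fig:useful_dlpa}, each of $\progsplit(P)$, $\progStore(P)$, $\progOkChange(P)$ and $\progmerge(P)$ is a sequence (or conjunction) indexed by $p \in P$ and $k \in \set{1,2}$ of blocks of constant size, hence each has size $O(|P|) = O(|\pi_1 \pll \pi_2|)$; crucially, \emph{none of these four components contains $\pi_1$ or $\pi_2$ as a subexpression}, so the only occurrences of $\pi_1$ and $\pi_2$ inside $\progFlatten(\pi_1, \pi_2)$ are the single renamed copies $\cp 1 {\pi_1}$ and $\cp 2 {\pi_2}$. A renaming $\cp k {\cdot}$ leaves the number of symbols of a program unchanged and only lengthens atom names; with the convention of Section~\ref{sec:copyVars} that distinct occurrences of $\pll$ use distinct copies, an atom name occurring in $t(\phi)$ is a variable of $\phi$ adorned with a list of at most $|\phi|$ copy-indices, hence of length polynomial in $|\phi|$.

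Combining these observations, a straightforward induction yields that $|t(\xi)|$ is polynomial in $|\xi|$ for every \DlpaPll program or formula $\xi$: the only nodes of $\xi$ at which $t$ adds more than a constant number of symbols are its (at most $|\xi|$) endogenous tests and parallel compositions, each of which contributes $O(|\xi|)$ extra symbols, times the polynomial bound on atom-name length, and — decisively — duplicates no subexpression. The point that really needs checking is precisely this last one: that the nesting of $\pll$ and of $\testendo$, together with the copy renamings, does not compound into a super-exponential blow-up. It does not, and the reason is structural: $t$ targets \Dlpa, which retains $;$, $\ndet$, ${}^\ast$ and exogenous tests, so — unlike the full reduction to boolean formulas of Section~\ref{sec:axiomatisation}, which eliminates the Kleene star via $\ldia{\pi^\ast}\phi \leqv \ldia{\pi^{\leq 2^{\card{\propsetOf \phi}}}}\phi$ and can therefore be super-exponentially long — $t$ never unfolds a star and never copies a subexpression into more than one place.
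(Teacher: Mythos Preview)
Your proof is correct and follows essentially the same approach as the paper's: both argue that $\progsplit(P)$, $\progStore(P)$, $\progOkChange(P)$, $\progmerge(P)$ and the endogenous-test expansion are each linear in $|P|$ (resp.\ $|\propsetOf\phi|$), that no subexpression is duplicated, and hence that applying $t$ along the syntax tree yields a polynomial blow-up. You are in fact more careful than the paper on one genuine subtlety it glosses over, namely the growth of atom names under nested copy renamings $\cp{k}{\cdot}$; your bound via the $\pll$-nesting depth is the right way to handle it.
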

\begin{proof}
It can easily be checked that for all $P$, $\progsplit(P)$, $\progStore(P)$, $\progOkChange(P)$ and $\progmerge(P)$ are linear in the size of $P$.
Therefore, $\progFlatten(\pi_1, \pi_2)$ is linear in the size of $\pi_1$ plus the size of $\pi_2$.
Similarly, 
$\left[\seqseq{p \in \propsetOf \phi} \Big(
\readable{p} \testpdl \ndet \big( \lnot \readable{p} \testpdl ; (\assgntopV{p} \ndet \assgnbotV{p}) \big) 
\Big)\right] \phi \testpdl $ is linear in the size of $\phi$.
All other translation expressions are clearly linear too.
Hence applying $t$ from the root of the syntax tree to its leaves, $t(\phi)$ can be computed in time polynomial in the size of $\phi$.
\end{proof}

\begin{theorem}
\DlpaPll model and satisfiability checking are both PSPACE complete.
\end{theorem}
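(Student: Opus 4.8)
**The plan is to derive PSPACE membership from the polynomial translation into \Dlpa, and PSPACE-hardness from the fact that \Dlpa is already a syntactic fragment.**

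For the upper bound, the strategy is to reduce both problems to the corresponding problems for \Dlpa, which are known to be PSPACE complete by~\cite{BalbianiHST14}. By Lemma~\ref{lem:polynomial} the translation $t$ is computable in polynomial time, hence the output $t(\phi)$ has polynomial size. For satisfiability, the preceding theorem tells us that $\phi$ is \DlpaPll-satisfiable if and only if the \Dlpa formula $t(\phi) \land \bigwedge_{p \in \propsetOf \phi}(\writable p \limp \readable p)$ is \Dlpa-satisfiable; this conjunction is still polynomial in the size of $\phi$ (there is one conjunct per variable of $\phi$, each of constant size once $\readable p$ and $\writable p$ are treated as propositional variables). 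Since \Dlpa-satisfiability is in PSPACE, so is \DlpaPll-satisfiability. For model checking, given a \DlpaPll model $\modl = \tuple{\readset,\writeset,\valuset}$ and a formula $\phi$, one first restricts to $\modl \modinter P$ with $P = \propsetOf\phi$ (legitimate by Lemma~\ref{theo:irrelevantVariables}), then forms the \Dlpa valuation $\valuset^+ = \valuset \cup \readOf{\readset} \cup \writeOf{\writeset}$ over the finite atom set, and invokes Lemma~\ref{lem:dlpatradcorrect}: $\modl \models \phi$ iff $\valuset^+ \models_{\Dlpa} t(\phi)$. Both the construction of $\valuset^+$ and of $t(\phi)$ are polynomial, and \Dlpa model checking is in PSPACE, so \DlpaPll model checking is too.

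For the lower bound, I would observe that the grammar of \Dlpa given at the start of Section~\ref{sec:complexity} is literally a sub-grammar of that of \DlpaPll: every \Dlpa program and formula is already a \DlpaPll program and formula, and on such expressions the two semantics agree (the \DlpaPll clauses for $\assgntopV p$, $\assgnbotV p$, $\phi\testpdl$, sequence, choice, and star do not touch the $\readset$, $\writeset$ components beyond carrying them along unchanged, and when $\writeset = \propset$ the writability side-condition on assignments is vacuous). Hence taking models with $\writeset = \readset = \propset$, a \Dlpa formula $\phi$ is \Dlpa-satisfiable (resp.\ true at a valuation) if and only if it is \DlpaPll-satisfiable (resp.\ true at the corresponding model). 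This is a trivial (identity) reduction from \Dlpa satisfiability and model checking, which are PSPACE hard, establishing PSPACE-hardness of \DlpaPll satisfiability and model checking.

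The only genuinely delicate point is making sure the translation does not blow up under nesting: a naive bound would let each level of nested parallel composition or endogenous test multiply the size, giving an exponential. The resolution, already packaged in Lemma~\ref{lem:polynomial}, is that $\progFlatten(\pi_1,\pi_2)$ is \emph{linear} in $|\pi_1| + |\pi_2|$ rather than multiplicative, and likewise for the endogenous-test expansion; so applying $t$ bottom-up over the syntax tree only adds a linear overhead at each node, yielding an overall polynomial bound. I would therefore treat the combinatorial size analysis as the step to state carefully, while the rest is routine assembly of the cited results; no other part of the argument should present an obstacle.
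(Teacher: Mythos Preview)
Your proposal follows the paper's strategy exactly: PSPACE membership via the polynomial translation $t$ into \Dlpa (Lemmas~\ref{lem:dlpatradcorrect} and~\ref{lem:polynomial}), and PSPACE hardness from the inclusion of \Dlpa in \DlpaPll. The upper bound and the model-checking lower bound are argued correctly and in more detail than the paper gives.

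There is one genuine slip in the satisfiability lower bound. You assert that for a \Dlpa formula $\phi$, \Dlpa-satisfiability and \DlpaPll-satisfiability coincide, so that the identity map is a valid reduction. The forward direction is fine (evaluate at $\langle\propset,\propset,\valuset\rangle$), but the converse fails: the \Dlpa formula $\lnot\ldia{\assgntopV p}\top$ is \DlpaPll-satisfiable (in any model with $p\notin\writeset$) yet \Dlpa-unsatisfiable, because in \Dlpa the assignment $\assgntopV p$ is unconditionally executable while in \DlpaPll it requires $p\in\writeset$. So the two semantics do \emph{not} agree on arbitrary \DlpaPll models, only on those where all relevant variables are writable; ``taking models with $\writeset=\readset=\propset$'' gives you one direction of the equivalence, not both. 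The repair is immediate and still polynomial: reduce $\phi$ to $\phi\land\bigwedge_{p\in\propsetOf\phi}\writable p$. Any \DlpaPll model of this conjunction has $\propsetOf\phi\subseteq\writeset$, and a straightforward induction shows that on such models the \DlpaPll and \Dlpa semantics of $\phi$ coincide (no \Dlpa program touches $\readset$ or $\writeset$, so the inclusion is preserved along executions). The paper's one-sentence hardness argument is equally silent on this point, so the intended reasoning is the same as yours.
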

\begin{proof}
First, PSPACE membership of \DlpaPll model and satisfiability checking follows from Lemmas~\ref{lem:dlpatradcorrect} and~\ref{lem:polynomial}. 
Second, since the language of \DlpaPll contains that of \Dlpa and since model and satisfiability checking are PSPACE hard for the latter~\cite{BalbianiHST14}, it follows that \DlpaPll model and satisfiability checking are PSPACE hard, too.
\end{proof}

\section{Discussion and Conclusion}\label{sec:conclusion}

We have added to Dynamic Logic of Propositional Assignments \Dlpa
a parallel composition operator in the spirit of separation logics. 
Our semantics augments \Dlpa valuations by readability and writability information. 
We have provided an axiomatisation in terms of a complete set of reduction axioms. 
Our reduction to \Dlpa ensures decidability. 
We have also proved PSPACE complexity via a polynomial translation to \Dlpa. 

We have adopted a stricter stance on race conditions than in~\cite{HerzigEtal-Ijcai19} 
where e.g.~the program $\assgntop p \pll \assgntop p$ is executable. 
Let us briefly compare these two semantics. 
In our case, the intuition is 
that $\pi_1 \pll \pi_2$ is executable if any interleaving of the components of $\pi_1$ and $\pi_2$ is executable, and 
that any of these interleavings leads to the same outcome. 
This is not guaranteed in the approach of \cite{HerzigEtal-Ijcai19}, which is motivated by parallel planning. 
There, it is generally considered that two actions that are executed in parallel should not interfere \cite{DBLP:journals/ai/BlumF97}:
they should not have conflicting effects and there should be no cross-interaction, where the second condition means that the effect of one action should not destroy the precondition of the other, and vice versa. 
For example, in a world of blocks the actions 
\begin{align*}
\mathsf{liftLeft}(b) &= \mathsf{OnTable}(b) ? ; \assgnbot{ \mathsf{OnTable}(b) } ; \assgntop{ \mathsf{HoldsLeft}(b) }  ,
\\
\mathsf{liftRight}(b) &= \mathsf{OnTable}(b) ? ; \assgnbot{ \mathsf{OnTable}(b) }  ; \assgntop{ \mathsf{HoldsRight}(b) } 
\end{align*}
of lifting block $b$ with the left robot arm and with the right robot arm have cross-interaction because 
one of the effects of $\mathsf{liftLeft}(b)$ is $\lnot \mathsf{OnTable}(b)$, which makes the precondition $\mathsf{pre}(\mathsf{liftRight}(b)) = \mathsf{OnTable}(b)$ of $\mathsf{lifRight}(b)$ false.
The more liberal semantics of \cite{HerzigEtal-Ijcai19} makes that it is not enough to describe the parallel execution of actions $\pi_1,\ldots,\pi_n$ as a step of a parallel plan by $\pi_1 \pll \ldots \pll \pi_n$. 
Instead, the absence of cross-interactions has to be checked `by hand', namely by explicitly inserting a \Dlpa test after each $\pi_i$ that the preconditions of the other actions are not violated: a step of a parallel plan is described by the program.
$$ \big( \pi_1 ; \bigwedge_{j \neq 1} \mathsf{pre}(\pi_j) ? \big) \pll \ldots \pll 
   \big( \pi_n ; \bigwedge_{j \neq n} \mathsf{pre}(\pi_j) ? \big) . $$
This is not necessary in our semantics where the absence of cross interaction between actions that are performed in parallel is `built-in'. 
Indeed, the parallel composition 
$\mathsf{liftLeft}(b) \pll \mathsf{liftRight}(b)$ 
is not executable in \DlpaPll: each subprogram requires writability of $\mathsf{OnTable}(b)$ to be executable. 
Our logic \DlpaPll can therefore be expected to provide a more appropriate base for parallel planning. 

The language of the extension of \Dlpa of \cite{HerzigEtal-Ijcai19} also contains an operator of inclusive nondeterministic composition, noted $\sqcup$. 
While the standard inclusive nondeterministic composition $\pi_1 \ndet \pi_2$ of \Pdl and \DlpaPll is read 
``do either $\pi_1$ or $\pi_2$'', the program $\pi_1 \sqcup \pi_2$ is read 
``do $\pi_1$ or $\pi_2$ \emph{or both}''. 
It has the same semantics as $\pi_1 \ndet \pi_2 \ndet (\pi_1 \pll \pi_2)$ and hence does not add expressivity. 
It is shown in \cite{HerzigEtal-Ijcai19} that it does not increase succinctness either.
This is proved by a polynomial reduction that uses the same `flattening' programs as the reduction of parallel composition. 
These programs are similar to our programs in Table~\ref{fig:useful_dlpa}, we therefore expect that inclusive nondeterministic composition does not increase the succinctness of the language of \DlpaPll either. 

The mathematical properties of \DlpaPll compare favourably with 
the high complexity or even undecidability of the 
other extensions of dynamic logic by a separating parallel composition operator that were proposed in the literature
\cite{BalbianiT14,Boudou16}.
Just as ours, the latter line of work is in the spirit of separation logic, having splitting and merging operations that are defined on system states. 
The axiomatisation that was introduced and studied in \cite{DBLP:journals/logcom/BalbianiB18} is restricted to the star-free fragment and the authors had to add propositional quantifiers in order to make parallel composition definable. 
This contrasts with the simplicity of our axiomatisation of \DlpaPll that we obtained by adding reduction axioms to the axiomatisation of \Dlpa. 
This can be related to the fact that propositional quantifiers can be expressed in \Dlpa: 
$\exists p \phi$ is equivalent to $\ldia{ \assgntopV p \ndet \assgnbotV p } \phi$ and 
$\forall p \phi$ is equivalent to $\lbox{ \assgntopV p \ndet \assgnbotV p } \phi$.
Just as \Dlpa can be viewed as an instance of PDL%
---the interpretation of atomic programs moves from PDL's abstract relation between states to concrete updates of valuations\mbox{---,}
\DlpaPll can be viewed as an instance of the logic of \cite{DBLP:journals/entcs/BenevidesFV11}
where the interpretation of parallel composition no longer resorts to
an abstract relation $\star$ associating three states, 
but instead has concrete functions that split and merge valuations and that are constrained by readability and writability information.

\section{Acknowledgements}
The paper benefited from comments and remarks from the reviewers as well as from the attendees of DaL\'i 2019, 
in particular Alexandru Baltag, Raul Fervari, Rainer H\"ahnle and Dexter Kozen. 
We would like to particularly thank the three reviewers of JLAMP who provided detailed and well-informed reviews. 
We did our best take all these comments into account. 

Andreas Herzig was partially supported by TAILOR, a project funded by EU 
Horizon 2020 research and innovation programme under GA No 952215.
Nicolas Troquard was supported by 
UNIBZ CRC 2019 project IN2092, Computations in Resource Aware Systems (CompRAS).

\appendix
\section{Appendix: Associativity of Parallel Composition}

\newcommand{\modla}{\mathsf a}
\newcommand{\readseta}{\mathsf {Rda}}
\newcommand{\writeseta}{\mathsf {Wra}}
\newcommand{\valuseta}{\mathsf {Va}}

Alongside commutativity, associativity is a desirable property of a parallel composition operator. While the operator defined in Section~\ref{sec:interpretation} is clearly commutative, whether it is associative does not strike the eye. In fact, a detailed proof of it is rather cumbersome. 
We present it here.

\begin{proposition}\label{prop:associativity}
$\modl \intPgm{ \pi_1 \pll (\pi_2 \pll \pi_3) } \modl'$ iff $\modl \intPgm{ (\pi_1 \pll \pi_2) \pll \pi_3 } \modl'$.
\end{proposition}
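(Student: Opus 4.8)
The plan is to route both sides of the equivalence through a single, manifestly symmetric ``ternary'' parallel-execution condition, and to prove separately that splitting and merging are themselves associative. First I would prove a \emph{frame lemma}: if $\modl \intPgm{\pi \pll \rho} \modl'$ then $\readset = \readset'$, $\writeset = \writeset'$ and $\valuset \setminus \writeset = \valuset' \setminus \writeset'$. For the read/write parts this is immediate: $\splt{\modl}{\modl_1}{\modl_2}$ gives $\readset = \readset_1 \cup \readset_2$ and $\writeset = \writeset_1 \cup \writeset_2$, the legality clauses give $\readset_i = \readset'_i$ and $\writeset_i = \writeset'_i$, and $\mrg{\modl'}{\modl'_1}{\modl'_2}$ gives $\readset' = \readset'_1 \cup \readset'_2$ and $\writeset' = \writeset'_1 \cup \writeset'_2$; the valuation part is a short computation from $\valuset' \setminus (\writeset'_1 \cup \writeset'_2) = (\valuset'_1 \cap \valuset'_2) \setminus (\writeset'_1 \cup \writeset'_2)$, using $\valuset'_i \setminus \writeset'_i = \valuset_i \setminus \writeset_i = \valuset \setminus \writeset_i$ (the last equality because $\valuset_1 = \valuset_2 = \valuset$). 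The role of this lemma is that in $\modl \intPgm{\pi_1 \pll (\pi_2 \pll \pi_3)} \modl'$ the legality conditions imposed on the \emph{inner} subprogram $\pi_2 \pll \pi_3$ are redundant: they hold automatically once $\modl_{23} \intPgm{\pi_2 \pll \pi_3} \modl'_{23}$ holds. Hence, unfolding $\pll$ twice, $\modl \intPgm{\pi_1 \pll (\pi_2 \pll \pi_3)} \modl'$ is equivalent to the existence of $\modl_1,\modl_2,\modl_3,\modl_{23},\modl'_1,\modl'_2,\modl'_3,\modl'_{23}$ with $\splt{\modl}{\modl_1}{\modl_{23}}$, $\splt{\modl_{23}}{\modl_2}{\modl_3}$, $\mrg{\modl'}{\modl'_1}{\modl'_{23}}$, $\mrg{\modl'_{23}}{\modl'_2}{\modl'_3}$, and, for each $i \in \set{1,2,3}$, $\modl_i \intPgm{\pi_i}\modl'_i$ together with $\readset_i = \readset'_i$, $\writeset_i = \writeset'_i$, $\valuset_i \setminus \writeset_i = \valuset'_i \setminus \writeset'_i$; by the same reasoning $\modl \intPgm{(\pi_1 \pll \pi_2) \pll \pi_3} \modl'$ is equivalent to the analogous statement with intermediate models $\modl_{12}$ (splitting off $\modl_3$) and $\modl'_{12}$.

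Second I would establish two ternary lemmas. \textbf{Splitting:} there is an $\modl_{23}$ with $\splt{\modl}{\modl_1}{\modl_{23}}$ and $\splt{\modl_{23}}{\modl_2}{\modl_3}$ iff $\modl_1,\modl_2,\modl_3$ are pairwise RW-disjoint, $\readset = \readset_1 \cup \readset_2 \cup \readset_3$, $\writeset = \writeset_1 \cup \writeset_2 \cup \writeset_3$ and $\valuset = \valuset_1 = \valuset_2 = \valuset_3$, in which case $\modl_{23} = \tuple{\readset_2 \cup \readset_3,\writeset_2 \cup \writeset_3,\valuset}$; this is routine set algebra, the only point being that RW-disjointness of $\modl_1$ with $\modl_{23}$ unpacks to RW-disjointness of $\modl_1$ with $\modl_2$ and with $\modl_3$. \textbf{Merging:} under the standing assumption that some set $V$ satisfies $\valuset'_i \setminus \writeset'_i = V \setminus \writeset'_i$ for all $i$ (here $V = \valuset$ works, since $\valuset'_i \setminus \writeset'_i = \valuset_i \setminus \writeset_i = \valuset \setminus \writeset'_i$), there is an $\modl'_{23}$ with $\mrg{\modl'}{\modl'_1}{\modl'_{23}}$ and $\mrg{\modl'_{23}}{\modl'_2}{\modl'_3}$ iff $\modl'_1,\modl'_2,\modl'_3$ are pairwise RW-disjoint, $\readset' = \readset'_1 \cup \readset'_2 \cup \readset'_3$, $\writeset' = \writeset'_1 \cup \writeset'_2 \cup \writeset'_3$, $\valuset'_i \setminus \writeset' = \valuset'_j \setminus \writeset'$ for all $i,j$, and $\valuset' = (\valuset'_1 \cap \writeset'_1) \cup (\valuset'_2 \cap \writeset'_2) \cup (\valuset'_3 \cap \writeset'_3) \cup (\valuset'_1 \cap \valuset'_2 \cap \valuset'_3)$, in which case $\modl'_{23} = \tuple{\readset'_2 \cup \readset'_3,\writeset'_2 \cup \writeset'_3,(\valuset'_2 \cap \writeset'_2) \cup (\valuset'_3 \cap \writeset'_3) \cup (\valuset'_2 \cap \valuset'_3)}$. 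In both lemmas the right-hand characterisation is symmetric in $1,2,3$, so the versions with $\modl_{12}$ and $\modl'_{12}$ follow by relabelling.

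Putting these together, both $\modl \intPgm{\pi_1 \pll (\pi_2 \pll \pi_3)} \modl'$ and $\modl \intPgm{(\pi_1 \pll \pi_2) \pll \pi_3} \modl'$ reduce to one and the same symmetric condition --- a ternary split of $\modl$ into $\modl_1,\modl_2,\modl_3$, three legal executions $\modl_i \intPgm{\pi_i}\modl'_i$, and a ternary merge of $\modl'_1,\modl'_2,\modl'_3$ into $\modl'$ --- which settles the proposition. I expect the main obstacle to be the merging lemma. Unlike splitting, merging alters the valuation, so both directions require checking that the valuation produced by nesting two binary merges (first merge $\modl'_2$ with $\modl'_3$, then merge the result with $\modl'_1$) simplifies --- using disjointness of the $\writeset'_i$ and absorption --- to the symmetric four-term expression above, and, more delicately, that in the backward direction the agreement condition of the \emph{inner} merge, $\valuset'_2 \setminus (\writeset'_2 \cup \writeset'_3) = \valuset'_3 \setminus (\writeset'_2 \cup \writeset'_3)$, can be recovered. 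This last step is exactly where the hypothesis $\valuset'_i \setminus \writeset'_i = V \setminus \writeset'_i$ is used: it forces $\valuset'_2$ and $\valuset'_3$ to coincide on $\writeset'_1$ as well (since $\writeset'_1$ is disjoint from both $\writeset'_2$ and $\writeset'_3$), upgrading the global agreement outside $\writeset'$ to agreement outside $\writeset'_2 \cup \writeset'_3$.
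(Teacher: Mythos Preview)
Your proposal is correct and is, in fact, a cleaner organisation of the argument than the paper's own proof. The paper proceeds by direct verification: it fully unfolds $\modl \intPgm{\pi_1 \pll (\pi_2 \pll \pi_3)} \modl'$ into eighteen clauses, then explicitly defines the intermediate models $\modla_\# = \tuple{\readset_1 \cup \readset_2,\writeset_1 \cup \writeset_2,\valuset}$ and $\modla'_\# = \tuple{\readset'_1 \cup \readset'_2,\writeset'_1 \cup \writeset'_2,(\valuset'_1 \cap \writeset_1) \cup (\valuset'_2 \cap \writeset_2) \cup (\valuset'_1 \cap \valuset'_2)}$ for the other bracketing, and checks each of the eighteen clauses for that side one by one. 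Your route instead factors both sides through the same symmetric ternary characterisation via a split lemma and a merge lemma; the paper's calculation that $\valuset'$ reduces to $(\valuset'_1 \cap \writeset'_1) \cup (\valuset'_2 \cap \writeset'_2) \cup (\valuset'_3 \cap \writeset'_3) \cup (\valuset'_1 \cap \valuset'_2 \cap \valuset'_3)$ is exactly the content of your merging lemma, and its instrumental claims~1--2 ($\writeset'_* = \writeset_*$, $\writeset = \writeset'$) amount to fragments of your frame lemma. So the underlying computations coincide, but your packaging makes the associativity visible as a consequence of a symmetric description rather than of a long checklist, and it isolates precisely where the delicate step lies: the backward direction of the merge lemma, where the inner agreement condition $\valuset'_2 \setminus (\writeset'_2 \cup \writeset'_3) = \valuset'_3 \setminus (\writeset'_2 \cup \writeset'_3)$ must be recovered on $\writeset'_1$ via the common reference valuation $V$. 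The paper's direct approach avoids naming that hypothesis but pays for it in length; your approach makes the dependence explicit and yields reusable lemmas.
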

\begin{proof}

  \begin{figure}
    \centering
    \begin{tabular}{lcl}
      $\exists \modl_1, \modl_*, \modl_2, \modl_3, \modl'_1, \modl'_2, \modl'_3, \modl'_*$': &  ~~~~~\emph{iff}~~~~~&    $\exists \modla_\#, \modla_3, \modla_1, \modla_2, \modla_1', \modla_2', \modla_\#', \modla_3'$:\\
      \begin{tikzpicture}[>=latex', join=bevel, initial text = , every node/.style=, scale=1.2]
  \node (m) at (0bp, 0bp) {$\modl$};

  \node (m1) at (10bp, 30bp) {$\modl_1$};
  \node (mast) at (10bp, -30bp) {$\modl_*$};

  \node (m1prime) at (100bp, 30bp) {$\modl'_1$};
  \node (m2) at (20bp, -5bp) {$\modl_2$};  
  \node (m3) at (20bp, -55bp) {$\modl_3$};  

  \node (mastprime) at (100bp, -30bp) {$\modl'_*$};
  \node (m2prime) at (90bp, -5bp) {$\modl'_2$};  
  \node (m3prime) at (90bp, -55bp) {$\modl'_3$};  

  \node (mprime) at (110bp, 0bp) {$\modl'$};
  \draw[thick, dashed, ->] (m) to node [] {} (m1);
  \draw[thick, dashed, ->] (m) to node [] {} (mast);

  \draw[thick, dashed, ->] (mast) to node [] {} (m2);
  \draw[thick, dashed, ->] (mast) to node [] {} (m3);

  \draw[thick, ->] (m1) to node [below] {$\pi_1$} (m1prime);  
  \draw[thick, ->] (m2) to node [below] {$\pi_2$} (m2prime);  
  \draw[thick, ->] (m3) to node [above] {$\pi_3$} (m3prime);  

  \draw[thick, dotted,->] (m2prime) to node [] {} (mastprime);
  \draw[thick, dotted,->] (m3prime) to node [] {} (mastprime);

  \draw[thick, dotted, ->] (m1prime) to node [] {} (mprime);  
  \draw[thick, dotted,->] (mastprime) to node [] {} (mprime);  
  
\end{tikzpicture}
      & ~ &
\begin{tikzpicture}[>=latex', join=bevel, initial text = , every node/.style=, scale=1.2]
  \node (m) at (0bp, 0bp) {$\modl$};

  \node (a3) at (10bp, -30bp) {$\modla_3$};
  \node (msharp) at (10bp, 30bp) {$\modla_\#$};

  \node (a3prime) at (100bp, -30bp) {$\modla'_3$};
  \node (a2) at (20bp, 5bp) {$\modla_2$};  
  \node (a1) at (20bp, 55bp) {$\modla_1$};  

  \node (msharpprime) at (100bp, 30bp) {$\modla'_\#$};
  \node (a2prime) at (90bp, 5bp) {$\modla'_2$};  
  \node (a1prime) at (90bp, 55bp) {$\modla'_1$};  

  \node (mprime) at (110bp, 0bp) {$\modl'$};
  \draw[thick, dashed, ->] (m) to node [] {} (a3);
  \draw[thick, dashed, ->] (m) to node [] {} (msharp);

  \draw[thick, dashed, ->] (msharp) to node [] {} (a2);
  \draw[thick, dashed, ->] (msharp) to node [] {} (a1);

  \draw[thick, ->] (a3) to node [above] {$\pi_3$} (a3prime);  
  \draw[thick, ->] (a2) to node [above] {$\pi_2$} (a2prime);  
  \draw[thick, ->] (a1) to node [below] {$\pi_1$} (a1prime);  

  \draw[thick, dotted, ->] (a2prime) to node [] {} (msharpprime);
  \draw[thick, dotted, ->] (a1prime) to node [] {} (msharpprime);

  \draw[thick, dotted, ->] (a3prime) to node [] {} (mprime);  
  \draw[thick, dotted, ->] (msharpprime) to node [] {} (mprime);  
  
\end{tikzpicture}
    \end{tabular}
    \caption{\label{fig:illustration-associativity} Illustration of associativity. Visual aid for the proof of Proposition~\ref{prop:associativity}.}
  \end{figure}
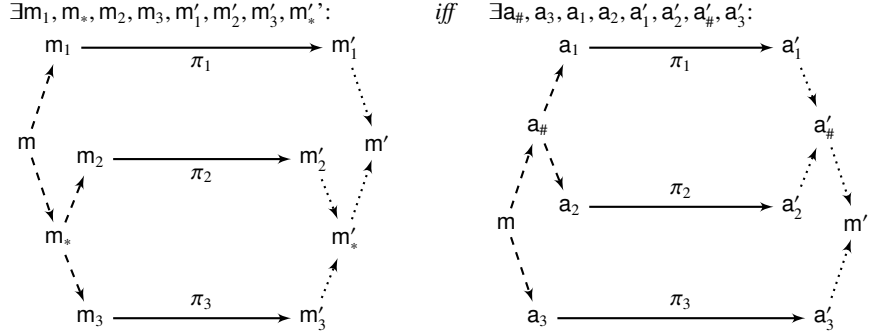
  






Figure~\ref{fig:illustration-associativity} illustrates precisely what we are going to show.
Suppose $\modl = \tuple{\readset, \writeset, \valuset}$ and $\modl' = \tuple{\readset', \writeset', \valuset'}$.

\paragraph{Left-hand side.} $\modl \intPgm{ \pi_1 \pll (\pi_2 \pll \pi_3) } \modl'$.

There are $\modl_1, \modl_*, \modl'_1, \modl'_*$ such that (with, $\modl_1 = \tuple{\readset_1, \writeset_1, \valuset_1}$, $\modl'_1 = \tuple{\readset'_1, \writeset'_1, \valuset'_1}$, $\modl_* = \tuple{\readset_*, \writeset_*, \valuset_*}$, $\modl'_* = \tuple{\readset'_*, \writeset'_*, \valuset'_*}$):
\begin{enumerate}
\item\label{i:split-merge1}\label{i:lhsfirst} $\splt{\modl}{\modl_1} {\modl_*}$
  and $\mrg{\modl'}{\modl'_1} {\modl'_*}$ 
\item $\modl_1 \intPgm{ \pi_1 } \modl'_1$
\item $\readset_1 = \readset'_1$
  and $\writeset_1 = \writeset'_1$
  and $\valuset_1 \setminus \writeset_1 = \valuset'_1 \setminus \writeset'_1$
\item\label{i:pi2pi3} $\modl_* \intPgm{ \pi_2 \pll \pi_3 } \modl'_*$ 
\item $\readset_* = \readset'_*$
  and $\writeset_* = \writeset'_*$ 
  and $\valuset_* \setminus \writeset_* = \valuset'_* \setminus \writeset'_*$
\end{enumerate}
Item~\ref{i:pi2pi3} is equivalent to: there are $\modl_2, \modl_3, \modl'_2, \modl'_3$ such that  (with $\modl_2 = \tuple{\readset_2, \writeset_2, \valuset_2}$, $\modl'_2 = \tuple{\readset'_2, \writeset'_2, \valuset'_2}$, $\modl_3 = \tuple{\readset_3, \writeset_3, \valuset_3}$, $\modl'_3 = \tuple{\readset'_3, \writeset'_3, \valuset'_3}$):
\begin{enumerate}[resume]
\item\label{i:split-merge2} $\splt{\modl_*}{\modl_2} {\modl_3} $ and $\mrg{\modl_*'}{\modl'_2} {\modl'_3} $,
\item $\modl_2 \intPgm{ \pi_2 } \modl'_2$, 
\item $\readset_2 = \readset'_2 $ and $\writeset_2 = \writeset'_2 $ and $\valuset_2 \setminus \writeset_2 = \valuset'_2 \setminus \writeset'_2$,
\item $\modl_3 \intPgm{ \pi_3 } \modl'_3$, 
\item $\readset_3 = \readset'_3 $ and $\writeset_3 = \writeset'_3 $ and $\valuset_3 \setminus \writeset_3 = \valuset'_3 \setminus \writeset'_3$.
\end{enumerate}
Item~\ref{i:split-merge1} is $\splt{\modl}{\modl_1} {\modl_*} $ and $\mrg{\modl'}{\modl'_1} {\modl'_*} $ iff:
\begin{enumerate}[resume]
\item
  $\writeset_1 \cap \readset_* = \writeset_* \cap \readset_1 = \emptyset$ ($\modl_1$ and $\modl_*$ are RW-compatible),
  \item 
    $\readset = \readset_1 \cup \readset_* $, and $\writeset = \writeset_1 \cup \writeset_*$, and $\valuset = \valuset_1 = \valuset_*$,

    
    \item $\writeset'_1 \cap \readset'_* = \writeset'_* \cap \readset'_1 = \emptyset$ ($\modl'_1$ and $\modl'_*$ are RW-compatible),
    \item $\readset' = \readset'_1 \cup \readset'_*$, $\writeset' = \writeset'_1 \cup \writeset'_*$, and $\valuset'_1 \setminus \writeset' = \valuset'_* \setminus \writeset'$, and $\valuset' = (\valuset'_1 \cap \writeset'_1) \cup (\valuset'_* \cap \writeset'_*) \cup (\valuset'_1 \cap \valuset'_*) $. 
\end{enumerate}
Item~\ref{i:split-merge2} is $\splt{\modl}{\modl_2} {\modl_3} $ and $\mrg{\modl'}{\modl'_2} {\modl'_3} $ iff:
\begin{enumerate}[resume]
\item
  $\writeset_2 \cap \readset_3 = \writeset_3 \cap \readset_2 = \emptyset$ ($\modl_2$ and $\modl_3$ are RW-compatible),
  \item 
    $\readset_* = \readset_2 \cup \readset_3 $, and $\writeset_* = \writeset_2 \cup \writeset_3$, and $\valuset_* = \valuset_2 = \valuset_3$,

    
    \item $\writeset'_2 \cap \readset'_3 = \writeset'_3 \cap \readset'_2 = \emptyset$ ($\modl'_2$ and $\modl'_3$ are RW-compatible),
    \item\label{i:lhslast} $\readset_*' = \readset'_2 \cup \readset'_3$, $\writeset_*' = \writeset'_2 \cup \writeset'_3$, and $\valuset'_2 \setminus \writeset_*' = \valuset'_3 \setminus \writeset_*'$, and $\valuset_*' = (\valuset'_2 \cap \writeset'_2) \cup (\valuset'_3 \cap \writeset'_3) \cup (\valuset'_2 \cap \valuset'_3) $. 
\end{enumerate}

\paragraph{Right-hand side.} $\modl \intPgm{ (\pi_1 \pll \pi_2) \pll \pi_3 } \modl'$.

There are 
$\modla_\# = \tuple{\readseta_\#, \writeseta_\#, \valuseta_\#}$, $\modla'_\# = \tuple{\readseta'_\#, \writeseta'_\#, \valuseta'_\#}$, $\modla_3 = \tuple{\readseta_3, \writeseta_3, \valuseta_3}$ and $\modla'_3 = \tuple{\readseta'_3, \writeseta'_3, \valuseta'_3}$
such that:
\begin{enumerate}[resume]
\item\label{i:split-merge3}\label{i:rhsfirst} $\splt{\modl}{\modla_\#} {\modla_3}$
  and $\mrg{\modl'}{\modla'_\#} {\modla'_3}$, 
\item\label{i:pi1pi2} $\modla_\# \intPgm{ \pi_1 \pll \pi_2} \modla'_\#$,
\item $\readseta_\# = \readseta'_\#$
  and $\writeseta_\# = \writeseta'_\#$
  and $\valuseta_\# \setminus \writeseta_\# = \valuseta'_\# \setminus \writeseta'_\#$,
\item $\modla_3 \intPgm{ \pi_3 } \modla'_3$, 
\item $\readseta_3 = \readseta'_3$
  and $\writeseta_3 = \writeseta'_3$ 
  and $\valuseta_3 \setminus \writeseta_3 = \valuseta'_3 \setminus \writeseta'_3$.
\end{enumerate}
Item~\ref{i:pi1pi2} is equivalent to:
there are $\modla_1, \modla_2, \modla'_1, \modla'_2$ such that 
(with, $\modla_1 = \tuple{\readseta_1, \writeseta_1, \valuseta_1}$, $\modla'_1 = \tuple{\readseta'_1, \writeseta'_1, \valuseta'_1}$, $\modla_2 = \tuple{\readseta_2, \writeseta_2, \valuseta_2}$, $\modla'_2 = \tuple{\readseta'_2, \writeseta'_2, \valuseta'_2}$):
\begin{enumerate}[resume]
\item\label{i:split-merge4} $\splt{\modla_\#}{\modla_1} {\modla_2}$
  and $\mrg{\modla'_\#}{\modla'_1} {\modla'_2}$, 
\item $\modla_1 \intPgm{ \pi_1} \modla'_1$,
\item $\readseta_1 = \readseta'_1$,
  and $\writeseta_1 = \writeseta'_1$
  and $\valuseta_1 \setminus \writeseta_1 = \valuseta'_1 \setminus \writeseta'_1$,
\item $\modla_2 \intPgm{ \pi_2 } \modla'_2$, 
\item $\readseta_2 = \readseta'_2$
  and $\writeseta_2 = \writeseta'_2$ 
  and $\valuseta_2 \setminus \writeseta_2 = \valuseta'_2 \setminus \writeseta'_2$.
\end{enumerate}
Item~\ref{i:split-merge3} is $\splt{\modl}{\modla_\#} {\modla_3}$ and $\mrg{\modl'}{\modla'_\#} {\modla'_3}$ iff:
\begin{enumerate}[resume]
\item
  $\writeseta_\# \cap \readseta_3 = \writeseta_3 \cap \readseta_\# = \emptyset$ ($\modla_\#$ and $\modla_3$ are RW-compatible),
  \item 
    $\readset = \readseta_\# \cup \readseta_3 $, and $\writeset = \writeseta_\# \cup \writeseta_3$, and $\valuset = \valuseta_\# = \valuseta_3$,

    
    \item $\writeseta'_\# \cap \readseta'_3 = \writeseta'_3 \cap \readseta'_\# = \emptyset$ ($\modla'_\#$ and $\modla'_3$ are RW-compatible),
    \item $\readset' = \readseta'_\# \cup \readseta'_3$, $\writeset' = \writeseta'_\# \cup \writeseta'_3$, and $\valuseta'_\# \setminus \writeset' = \valuseta'_3 \setminus \writeset'$, and $\valuset' = (\valuseta'_\# \cap \writeseta'_\#) \cup (\valuseta'_3 \cap \writeseta'_3) \cup (\valuseta'_\# \cap \valuseta'_3) $. 
\end{enumerate}
Item~\ref{i:split-merge4} is $\splt{\modla_\#}{\modla_1} {\modla_2}$ and $\mrg{\modla'_\#}{\modla'_1} {\modla'_2}$ iff:
\begin{enumerate}[resume]
\item
  $\writeseta_1 \cap \readseta_2 = \writeseta_2 \cap \readseta_1 = \emptyset$ ($\modla_1$ and $\modla_2$ are RW-compatible),
  \item 
    $\readseta_\# = \readseta_1 \cup \readseta_2 $, and $\writeseta_\# = \writeseta_1 \cup \writeseta_2$, and $\valuseta_\# = \valuseta_1 = \valuseta_2$,

    
    \item $\writeseta'_1 \cap \readseta'_2 = \writeseta'_2 \cap \readseta'_1 = \emptyset$ ($\modla'_1$ and $\modla'_2$ are RW-compatible),
    \item\label{i:rhslast} $\readseta_\#' = \readseta'_1 \cup \readseta'_2$, $\writeseta_\#' = \writeseta'_1 \cup \writeseta'_2$, and $\valuseta'_1 \setminus \writeseta_\#' = \valuseta'_2 \setminus \writeseta_\#'$, and $\valuseta_\#' = (\valuseta'_1 \cap \writeseta'_1) \cup (\valuseta'_2 \cap \writeseta'_2) \cup (\valuseta'_1 \cap \valuseta'_2) $. 
\end{enumerate}

\paragraph{Left to right.}

Suppose lhs. We define:
\begin{itemize} 
\item $\modla_\# = \tuple{\readseta_\#, \writeseta_\#, \valuseta_\#} = \tuple{\readset_1 \cup \readset_2, \writeset_1 \cup \writeset_2, \valuset}$,
\item $\modla_1 = \tuple{\readseta_1, \writeseta_1, \valuseta_1} = \modl_1 = \tuple{\readset_1, \writeset_1, \valuset}$,
\item $\modla_2 = \tuple{\readseta_2, \writeseta_2, \valuseta_2} = \modl_2 = \tuple{\readset_2, \writeset_2, \valuset}$,
\item $\modla_3 = \tuple{\readseta_3, \writeseta_3, \valuseta_3} = \modl_3 = \tuple{\readset_3, \writeset_3, \valuset}$,
\item $\modla'_\# = \tuple{\readseta'_\#, \writeseta'_\#, \valuseta'_\#} = \tuple{\readset'_1 \cup \readset'_2, \writeset'_1 \cup \writeset'_2, (\valuset'_1 \cap \writeset_1) \cup (\valuset'_2 \cap \writeset_2) \cup (\valuset'_1 \cap \valuset'_2)}$,
\item $\modla'_1 = \tuple{\readseta'_1, \writeseta'_1, \valuseta'_1) = \modl'_1 = (\readset'_1, \writeset'_1, \valuset'_1}$,
\item $\modla'_2 = \tuple{\readseta'_2, \writeseta'_2, \valuseta'_2} = \modl'_2 = \tuple{\readset'_2, \writeset'_2, \valuset'_2}$,
\item $\modla'_3 = \tuple{\readseta'_3, \writeseta'_3, \valuseta'_3} = \modl'_3 = \tuple{\readset'_3, \writeset'_3, \valuset'_3}$. 
\end{itemize}

We must show that these models satisfy all the properties from~\ref{i:rhsfirst} through~\ref{i:rhslast}.
\begin{itemize}
\item[19] if and only if
  \begin{itemize}
  \item[29]  $\writeseta_\# \cap \readseta_3 = \emptyset$,
    $\writeseta_3 \cap \readseta_\# = \emptyset$. It holds because:
    \begin{itemize}
    \item $\writeseta_\# \cap \readseta_3 = (\writeset_1 \cup \writeset_2) \cap \readset_3 = (\writeset_1 \cap \readset_3) \cup (\writeset_2 \cap \readset_3)$.
      By 15, we have $\writeset_2 \cap \readset_3 = \emptyset$.
      By 16, we have $\readset_3 \subseteq \readset_*$.
      By 11, we have $\writeset_1 \cap \readset_* = \emptyset$. So, $\writeset_1 \cap \readset_3 = \emptyset$.
    \item $\writeseta_3 \cap \readseta_\# = \writeset_3 \cap (\readset_1 \cup \readset_2) = (\writeset_3 \cap \readset_1) \cup (\writeset_3 \cap \readset_2)$.
      By 15, we have $\writeset_3 \cap \readset_2 = \emptyset$.
      By 11, we have $\writeset_* \cap \readset_1 = \emptyset$.
      By 16, we have $\writeset_3 \subseteq \writeset_*$.
      So, $\writeset_3 \cap \readset_1 = \emptyset$.
    \end{itemize}
  
  \item[30]
$\readseta_\# \cup \readseta_3 = (\readset_1 \cup \readset_2) \cup \readset_3 = \readset_1 \cup \readset_* = \readset$ (definition and 16 and 12).
    $\writeseta_\# \cup \writeseta_3 = (\writeset_1 \cup \writeset_2) \cup \writeset_3 = \writeset_1 \cup \writeset_* = \writeset$ (definition and 16 and 12).
    $\valuset = \valuseta_\# = \valuseta_3$ (definition).
    
  \item[31] $\writeseta'_\# \cap \readseta'_3 = \writeseta'_3 \cap \readseta'_\# = \emptyset$. It holds because:
    \begin{itemize}
    \item $\writeseta'_\# \cap \readseta'_3 = (\writeset'_1 \cup \writeset'_2) \cap \readset'_3$ by definition. It is equal to $(\writeset_1 \cap \readset_3) \cup (\writeset_2 \cap \readset_3)$, by 3, 8, 10. We have $\writeset_1 \cap \readset_* = \emptyset$ (11), and $\readset_3 \subseteq \readset_*$ (16). So $\writeset_1 \cap \readset_3 = \emptyset$.
      We have $\writeset_2 \cap \readset_3 = \emptyset$ (15).
    \item $\writeseta'_3 \cap \readseta'_\# = \writeseta'_3 \cap (\readset'_1 \cup \readset'_2)$ by definition. It is equal to $(\writeseta_3 \cap \readset_1) \cup (\writeseta_3 \cap \readset_2)$, by 10, 3, 8. We have $\writeset_3 \subseteq \writeset_*$ (16) and $\writeset_* \cap \readset_1 = \emptyset$ (11).
      So $\writeseta_3 \cap \readset_1 = \emptyset$.
      We have $\writeseta_3 \cap \readset_2 = \emptyset$ (15).
    \end{itemize}

  \item[32]
    \begin{itemize}
    \item Instrumental claims:
      \begin{itemize}
  \item(claim~1) $\writeset'_* = \writeset_*$, by 16, 7, 9 and 18.
  \item(claim~2) $\writeset = \writeset'$, by 12, 3, claim~1, 14.
  \item(claim~3) $\writeset_1 \subseteq \writeset'$, by claim~2, and 12.
  \item(claim~4.1) $\writeset_2 \subseteq \writeset'$, by claim~2, 16, and 12.
  \item(claim~4.2) $\writeset_3 \subseteq \writeset'$, by claim~2, 16, and 12.    
  \item(claim~5) $\valuset'_1 \setminus \writeset' = \valuset_1 \setminus \writeset'$, by 3, 12, and claim~2.
  \item(claim~6) $\valuset'_2 \setminus \writeset' = \valuset_2 \setminus \writeset'$, by 8, 16, and 12.
  \item(claim~7) $\valuset'_3 \setminus \writeset' = \valuset_3 \setminus \writeset'$, by 10, 16, and 12.
  \item(claim~8) $\writeset_1 \cap \writeset_2 = \emptyset$, by 11, 12, the `write-set included in read-set' model constraint, and 16.
      \end{itemize}
    \item $\readset' = \readseta'_\# \cup \readseta'_3$ and $\writeset' = \writeseta'_\# \cup \writeseta'_3$ hold by definition, 16 and 12. 

    \item $\valuseta'_\# \setminus \writeset' = \valuseta'_3 \setminus \writeset'$ holds because:
      \begin{itemize}
        \item $\valuseta'_\# \setminus \writeset' = (\valuset_1' \cup \valuset_2') \setminus \writeset'$ by definitions, claim~3, and claim~4.1. By claim~5 and claim~6, it is equal to $(\valuset_1 \cup \valuset_2) \setminus \writeset'$, which by 12 and 16 is $\valuset \setminus \writeset'$.
          \item
      Moreover, $\valuseta'_3 \setminus \writeset' = \valuset'_3 \setminus \writeset'$ by definition. It is equal to $\valuset_3 \setminus \writeset'$ by claim~7, and to $\valuset \setminus \writeset'$ by 12 and 16.
      \end{itemize}
    \item $\valuset' = (\valuseta'_\# \cap \writeseta'_\#) \cup (\valuseta'_3 \cap \writeseta'_3) \cup (\valuseta'_\# \cap \valuseta'_3)$  holds because:
      \begin{itemize}
      \item $\valuseta'_\# \cap \writeseta'_\# = ((\valuset'_1 \cap \writeset'_1) \cup (\valuset'_2 \cap \writeset'_2) \cup (\valuset'_1 \cap \valuset'_2)) \cap (\writeset'_1 \cup \writeset'_2)$ by definition, 3, and 8. We get $(\valuset'_1 \cap \writeset'_1 \cap \writeset'_1) \cup (\valuset'_1 \cap \writeset'_1 \cap \writeset'_2) \cup (\valuset'_2 \cap \writeset'_2 \cap \writeset'_1) \cup (\valuset'_2 \cap \writeset'_2 \cap \writeset'_2) \cup (\valuset'_1 \cap \valuset'_2 \cap \writeset'_1) \cup (\valuset'_1 \cap \valuset'_2 \cap \writeset'_2)$. With elementary set theory simplifications and claim~8, we obtain $(\valuset'_1 \cap \writeset'_1) \cup (\valuset'_2 \cap \writeset'_2)$.
\item $\valuseta'_3 \cap \writeseta'_3 = \valuset'_3 \cap \writeset'_3$.
\item $\valuseta'_\# \cap \valuseta'_3 ((\valuset'_1 \cap \writeset'_1) \cup (\valuset'_2 \cap \writeset'_2) \cup (\valuset'_1 \cap \valuset'_2)) \cap \valuset'_3$ by definition, 3, and 8. We get $(\valuset'_1 \cap \writeset'_1 \cap \valuset'_3) \cup (\valuset'_2 \cap \writeset'_2 \cap \valuset'_3) \cup (\valuset'_1 \cap \valuset'_2 \cap \valuset'_3)$.
\item The rhs quantity $(\valuseta'_\# \cap \writeseta'_\#) \cup (\valuseta'_3 \cap \writeseta'_3) \cup (\valuseta'_\# \cap \valuseta'_3)$ is then equal to
  $(\valuset'_1 \cap \writeset'_1) \cup (\valuset'_2 \cap \writeset'_2) \cup
  (\valuset'_3 \cap \writeset'_3) \cup (\valuset'_1 \cap \valuset'_2 \cap \valuset'_3)$.
\item Moreover, by 14, we have $\valuset' = (\valuset'_1 \cap \writeset'_1) \cup (\valuset'_* \cap \writeset'_*) \cup (\valuset'_1 \cap \valuset'_*)$.
\item By 18, $\valuset'_* = (\valuset'_2 \cap \writeset'_2) \cup (\valuset'_3 \cap \writeset'_3) \cup (\valuset'_1 \cap \valuset'_2)$.
\item By claim~1, $\writeset'_* = \writeset_*$, which by 16 is $\writeset_2 \cup \writeset_3$ which by 8 and 10 is $\writeset'_2 \cup \writeset'_3$. So $\writeset'_2 \subseteq \writeset'_*$ and $\writeset'_3 \subseteq \writeset'_*$.
\item So $(\valuset'_* \cap \writeset'_*) = (\valuset'_2 \cap \writeset'_2) \cup (\valuset'_3 \cap \writeset'_3) \cup (\valuset'_1 \cap \valuset'_2 \cap \writeset'_*)$, with $\valuset'_1 \cap \valuset'_2 \cap \writeset'_* = \valuset'_1 \cap \valuset'_2 \cap (\writeset'_2 \cup \writeset'_3)$ by 18, which is $(\valuset'_2 \cap \valuset'_3 \cap \writeset'_2) \cup (\valuset'_2 \cap \valuset'_3 \cap \writeset'_3)$. So $(\valuset'_* \cap \writeset'_*) = (\valuset'_2 \cap \writeset'_2) \cup (\valuset'_3 \cap \writeset'_3)$.
\item Also $(\valuset'_1 \cap \valuset'_*) = (\valuset'_1 \cap \valuset'_2 \cap \writeset'_2) \cup (\valuset'_1 \cap \valuset'_3 \cap \writeset'_3) \cup (\valuset'_1 \cap \valuset'_2 \cap \valuset'_3)$, with the first two disjuncts included in $\valuset'_* \cap \writeset'_*$.
\item The lhs quantity $V'$ is then equal to $(\valuset'_1 \cap \writeset'_1) \cup (\valuset'_2 \cap \writeset'_2) \cup
  (\valuset'_3 \cap \writeset'_3) \cup (\valuset'_1 \cap \valuset'_2 \cap \valuset'_3)$.
      \end{itemize}
    \end{itemize}

  \end{itemize}

\item[20] if and only if

  \begin{itemize}
  \item[24] if and only if
  \begin{itemize}
  \item[33] $\writeseta_1 \cap \readseta_2 = \emptyset$,
    $\writeseta_2 \cap \readseta_1 = \emptyset$. It holds, because:
    \begin{itemize}

    \item $\writeseta_1 \cap \readseta_2 = \writeset_1 \cap \readset_2$ by definition.
    From 16, $\readset_2 \subseteq \readset_*$. From 11, $\writeset_1 \cap \readset_* = \emptyset$. So $\writeset_1 \cap \readset_2 = \emptyset$.

    \item $\writeseta_2 \cap \readseta_1 = \writeset_2 \cap \readset_1$ by definition.
    From 16, $\writeset_2 \subseteq \writeset_*$. From 11, $\writeset_* \cap \readset_1 = \emptyset$. So $\writeset_2 \cap \readset_1 = \emptyset$.
    \end{itemize}
  \item[34] By definition.
  \item[35] By definition, 33, 3, and 8.
  \item[36] $\readseta_\#' = \readseta'_1 \cup \readseta'_2$, $\writeseta_\#' = \writeseta'_1 \cup \writeseta'_2$, and $\valuseta_\#' = (\valuseta'_1 \cap \writeseta'_1) \cup (\valuseta'_2 \cap \writeseta'_2) \cup (\valuseta'_1 \cap \valuseta'_2)$ by definition.
    Also, $\valuseta'_1 \setminus \writeseta_\#' = \valuseta'_2 \setminus \writeseta_\#'$ holds because:
    \begin{itemize}
    \item $\valuseta'_1 \setminus \writeseta_\#' = \valuset'_1 \setminus (\writeset'_1 \cup \writeset'_2)$, which by 3 and 8 is equal to $\valuset_1 \setminus (\writeset_1 \cup \writeset_2)$, which by 12 is equal to $\valuset \setminus (\writeset_1 \cup \writeset_2)$.
      \item Similarly, $\valuseta'_2 \setminus \writeseta_\#'$ is equal to $\valuset \setminus (\writeset_1 \cup \writeset_2)$ by 8, 3, 16 and 12.
    \end{itemize}
  \end{itemize}
\item[25] By definition, and 2.
\item[26] By definition, 3, and 12.
\item[27] By definition, and 7.
\item[28] By definition, 8, 16, and 12.
  \end{itemize}
\item[21] $\readseta_\# = \readseta'_\#$
  and $\writeseta_\# = \writeseta'_\#$ hold by definition, 3, and 8.
  Also by definition, 3, and 8, $\valuseta_\# \setminus \writeseta_\# = \valuseta'_\# \setminus \writeseta'_\#$ is equivalent to $\valuset \setminus (\writeset_1 \cup \writeset_2) = ((\valuset'_1 \cap \writeset'_1) \cup (\valuset'_2 \cap \writeset'_2) \cup (\valuset'_1 \cap \valuset'_2)) \setminus (\writeset_1 \cup \writeset_2)$. The right-hand-side simplifies into $(\valuset'_1 \cap \valuset'_2) \setminus (\writeset_1 \cup \writeset_2)$. Moreover, we have $\valuset'_1 \setminus \writeset'_1 = \valuset \setminus \writeset'_1$ (by 3 and 12) and $\valuset'_2 \setminus \writeset'_2 = \valuset \setminus \writeset'_2$ (by 8, 16, and 12). So we have $\valuset'_1 \setminus (\writeset'_1 \cup \writeset'_2) = \valuset \setminus (\writeset'_1 \cup \writeset'_2)$ and $\valuset'_2 \setminus (\writeset'_1 \cup \writeset'_2) = \valuset \setminus (\writeset'_1 \cup \writeset'_2)$. Hence, $(\valuset'_1 \cap \valuset'_2) \setminus (\writeset_1 \cup \writeset_2)$ simplifies into the left-hand-side $\valuset \setminus (\writeset_1 \cup \writeset_2)$.
\item[22] By definition, and 9.
\item[23] By definition, 10, 16, and 12.
\end{itemize}

\paragraph{Right to left.}

Suppose rhs. We define:
\begin{itemize}
\item $\modl_* = \tuple{\readset_*, \writeset_*, \valuset_*} = \tuple{\readseta_2 \cup \readseta_3, \writeseta_2 \cup \writeseta_3, \valuset}$,

\item $\modl_1 = \tuple{\readset_1, \writeset_1, \valuset_1} = \modla_1 = \tuple{\readseta_1, \writeseta_1, \valuseta}$,
\item $\modl_2 = \tuple{\readset_2, \writeset_2, \valuset_2} = \modla_2 = \tuple{\readseta_2, \writeseta_2, \valuseta}$,
\item $\modl_3 = \tuple{\readset_3, \writeset_3, \valuset_3} = \modla_3 = \tuple{\readseta_3, \writeseta_3, \valuseta}$,  

\item $\modl'_* = \tuple{\readset'_*, \writeset'_*, \valuset'_*} = \tuple{\readseta'_2 \cup \readseta'_3, \writeseta'_2 \cup \writeseta'_3, (\valuseta'_2 \cap \writeseta_2) \cup (\valuseta'_3 \cap \writeseta_3) \cup (\valuseta'_2 \cap \valuseta'_3)}$,
  
\item $\modl'_1 = \tuple{\readset'_1, \writeset'_1, \valuset'_1} = \modla'_1 = \tuple{\readseta'_1, \writeseta'_1, \valuseta'_1}$,
\item $\modl'_2 = \tuple{\readset'_2, \writeset'_2, \valuset'_2} = \modla'_2 = \tuple{\readseta'_2, \writeseta'_2, \valuseta'_2}$,
\item $\modl'_3 = \tuple{\readset'_3, \writeset'_3, \valuset'_3} = \modla'_3 = \tuple{\readseta'_3, \writeseta'_3, \valuseta'_3}$. 
\end{itemize}

We must show that these models satisfy all the properties from~\ref{i:lhsfirst} to~\ref{i:lhslast}. This is done routinely, analogously to the proof of the left-to-right direction above.
\end{proof}


\bibliographystyle{alpha}
\bibliography{biblio}

\end{document}